\documentclass[]{interact} 
\usepackage{amsmath,amssymb}
\usepackage{graphicx}
\usepackage{color}
\usepackage{ulem}

\theoremstyle{plain} 
\newtheorem{thm}{Theorem}[section]
\newtheorem{cor}[thm]{Corollary}
\newtheorem{lem}[thm]{Lemma}
\newtheorem{prop}[thm]{Proposition}
\theoremstyle{definition} 
\newtheorem{defn}[thm]{Definition}
\theoremstyle{remark} 
\newtheorem{rem}[thm]{Remark}
\numberwithin{equation}{section}

\newcommand{\R}{\mathbb R}

\newcommand{\norm}[1]{\left\Vert#1\right\Vert}
\newcommand{\abs}[1]{\left\vert#1\right\vert}


\newcommand{\h}{\mathcal{H}}

\newcommand{\Z}{\mathbb Z}

\newcommand{\E}{\mathrm{e}}

\newcommand{\beq}{\begin{equation}}
\newcommand{\eeq}{\end{equation}}
\newcommand{\ben}{\begin{enumerate}}
\newcommand{\een}{\end{enumerate}}

\newcommand{\C}{\mathbb C}

\newcommand{\lan}{\langle}
\newcommand{\ran}{\rangle}

\newcommand{\iu}{\mathrm{i}\mkern1mu}



\begin{document}

\title{On balanced homodyne measurement \\
  -- simple proof of wave function collapse}

\author{
\name{E. Br\"uning\textsuperscript{a}\thanks{CONTACT E. Br\"uning. Email: bruninge@ukzn.ac.za} and S. Nagamachi\textsuperscript{b}}
\affil{\textsuperscript{a}School of Mathematics, Statistics and Computer Science,
	University of KwaZulu-Natal, Private Bag X54001, Durban 4000, and  and National Institute for Theoretical Physics (NITheP), KwaZulu-Natal, South Africa} 
 \textsuperscript{b}The University of Tokushima, Tokushima 770-8506, Japan}  
 
\maketitle

\begin{abstract}
Using only elementary calculus we prove that the balanced homodyne detector measures the quadrature phase amplitude of a signal.  More precisely, after the measurement of photon numbers $l$, the collapse of the composite state of a strong laser beam and a signal approximates the collapse of the signal to the eigen-state of quadrature phase amplitude with eigen-value $r$.	
\end{abstract}

\begin{keywords}
quantum teleportation, continuous variables, laser beam, homodyne detection;\\ 81V80,81P45,47N50
\end{keywords}

 \tableofcontents

\section{Introduction} \label{intro}
Optical homodyne detection is the extremely useful and flexible measuring method and has
broad applications in optical communication, e.g., quantum key distribution \cite{Ma21,Li21}, quantum teleportation \cite{Fu98}, quantum computation \cite{FT22} and weak signal detection \cite{Ge11,SF18}. In the balanced homodyne detection, the signal is mixed with a relative strong local oscillator on a half-beamsplitter, the two output modes are detected by a pair of detectors and the difference of two photo currents can be measured \cite{Br90, Ge11,SF18}. 

This article is a continuation of our articles \cite{NB08, BN19}.
In order to explain how balanced homodyne detection arises and how it is used in the theory of quantum teleportation we recall the relevant parts of \cite{NB08, BN19}:
{The total system is composed of three Hilbert spaces $\h _j$ ($j=0, 1, 2$) which are isomorphic each other.
$c_l^*$ and $c_j$ are the creation and the annihilation operators defined in $\h _j$ and $\vert n \ran _j = \frac{1}{\sqrt{n!}} c_j^{*n} \vert 0 \ran $ ($n=0, 1, \ldots $) constitutes an orthonormal basis of $\h _j$, where $\vert 0 \ran $ is the vacuum vector.  The system composed of $\h _1$ and $\h _2$ describes some localized part denoted part (A).  The Hilbert space $\h _1 \hat{\otimes} \h _2$ has a generalized basis
\beq \label{gbbasis}
  \left\{  \pi ^{-1/2} \sum _{n=0}^{\infty } (D_0(\alpha ) \vert n \ran _0) \otimes \vert n \ran _1 ; \alpha \in \C  \right\} , \ \ D_0(\alpha ) = \E ^{\alpha c_0^* - \bar{\alpha }c_0}
\eeq
called a generalized Bell basis, where $D_j(\alpha)$ denotes the operator of translation by $\alpha$ on $\h _j$.
The system $\h _3$ describes another localized part denoted by (B).  
}
  Alice {(a local observer at part (A))} performs the Bell-state measurements with respect to the generalized Bell basis
 on the state $\vert \Phi \ran $ of the composite system of $\vert \psi \ran _0 {= \sum _{n=0}^{\infty} a_n \vert n \ran _0}$ (to be teleported) and a quantum channel $\sum _{n=0}^{\infty } q^n \vert n \ran _1 \otimes \vert n \ran _2$ {(via which $\vert \psi \ran _0$ is teleported to Bob (a local observer at part (B)))}, i.e., on
$$
\vert \Phi \ran = \pi ^{1/2} \vert \psi \ran _0 \otimes \sum _{n=0}^{\infty } q^n \vert n \ran _1 \otimes \vert n \ran _2 \in \h _0 \hat{\otimes } \h _1 \hat{\otimes } \h _2, \ \abs{q} < 1 .
$$
The Bell-state measurement is done by sending the state $\vert \Phi \ran $ through the half-beamsplitter $\E ^{\iu H_{hbs}^{01}}$ and by the measurements of the observables $x_0$ and $p_1$:
\beq \label{in:01hbs}
H_{hbs}^{01} = \iu(\pi /4)(c_{0}^{*} c_{1} - c_{0}c_{1}^{*}), \ \  
x_0 = \frac{1}{\sqrt{2}} (c_0 + c_0^*), \  \ p_1 = \frac{1}{\sqrt{2} \iu } (c_1 - c_1^*).
\eeq
Let $\{ \vert x_- \ran _0; x_- \in \R \}$ be the generalized basis which consists of generalized eigen-vectors $\vert x_- \ran _0$ of $x_0$ with the eigen-value $x_-$, and $\{ \vert p_+ \ran _1; p_+ \in \R \}$ {the generalized basis consisting} of generalized eigen-vectors $\vert p_+ \ran _1$ of $p_1$ with eigen-value $p_+$.
Then we have (see Remark \ref{ho:hbs})
\beq \label{in:hbsxp}
\begin{array}{ll}
\E ^{-\iu H_{hbs}^{01}} \vert x_- \ran _0 \otimes \vert p_+ \ran _1 &= \pi ^{-1/2} \sum _{k=0}^{\infty } (D_0(x_- + \iu p_+ ) \vert k \ran _0) \otimes \vert k \ran _1 \\ [1mm]
\Leftrightarrow \ \ {}_0 \lan x_- \vert \otimes {}_1 \lan p_+ \vert  \E ^{\iu H_{hbs}^{01}} & = \pi ^{-1/2} \sum _{k=0}^{\infty } {}_0 \lan k \vert D_0(x_- + \iu p_+ )^* \otimes {}_1 \lan k \vert .
\end{array}
\eeq
The measurement theory called projective measurement is based on the following postulates (see \cite{Wm09}):
\begin{defn} [Projective measurement] \label{in:projmesur} 
	{\ } \newline \indent
	Postulate 1) When one measures the observable $\Lambda = \sum _{\lambda } \lambda \Pi ^{\lambda }$ on
	the state $\vert \psi \ran $, the result one obtains is one of the eigenvalue $\lambda $.  The probability for obtaining that particular eigenvalue $\lambda $ is
	$\lan \psi \vert \Pi ^{\lambda } \vert \psi \ran /\lan \psi \vert \psi \ran $,
	{where $\Pi ^{\lambda }$ is the projection appearing in the spectral resolution $\Lambda = \sum _{\lambda } \lambda \Pi ^{\lambda }$ of $\Lambda $.}
	
	Postulate 2) After the measurement the initial state $\vert \psi \ran $ changes to $\Pi ^{\lambda } \vert \psi \ran $.
\end{defn}
When Alice performs the projective measurement of the observable $x_0 \otimes p_1 \otimes I$ to the state $(\E ^{\iu H_{hbs}^{01}} \otimes I) \vert \Phi \ran $ and has the result $(x_0, p_1) = (x_-, p_+)$, then the state $(\E ^{\iu H_{hbs}^{01}} \otimes I) \vert \Phi \ran $ changes to 
$$
\begin{array}{ll}
{}&(\vert x_- \ran _0 \otimes \vert p_+ \ran _1 {}_0 \lan x_- \vert \otimes {}_1 \lan p_+ \vert \otimes I)(\E ^{\iu H_{hbs}^{01}} \otimes I \vert \Phi \ran )
\\ [1mm]
{}&= \vert x_- \ran _0 \otimes \vert p_+ \ran _1
\pi ^{-1/2} \sum _{k=0}^{\infty } {}_0 \lan k \vert D_0(x_- + \iu p_+ )^* \otimes {}_1 \lan k \vert \otimes I \vert \Phi \ran .
\end{array}
$$
The essential part of (the equation describing) teleportation is
\beq \label{pbmeasure}
\begin{array}{ll}
{}&  \pi ^{-1/2} \sum _{k=0}^{\infty } {}_0 \lan k \vert D_0(\alpha )^* \otimes {}_1 \lan k \vert \otimes I \vert \Phi \ran   =\\ [2mm]
{}& \sum _{k=0}^{\infty } {}_0 \lan k \vert D_0(\alpha )^* \vert \psi \ran _0 {}_1 \lan k \vert \sum _{n=0}^{\infty } q^n \vert n \ran _1 \otimes \vert n \ran _2 \\ [2mm]
{}&= \sum _{k=0}^{\infty } q^k {}_0 \lan k \vert D_0(\alpha )^* \vert \psi \ran _0 \vert k \ran _2 
\rightarrow D_2(\alpha )^* \vert \psi \ran _2
\end{array}
\eeq
as $q \rightarrow 1$, where $\vert \psi \ran _2 = \sum _{n=0}^{\infty } a_n \vert n \ran _2$.  Note that $\h _j$ are isomorphic to each other by the correspondence $\h _i \ni \vert n \ran _i \leftrightarrow \vert n \ran _j \in \h _j$.  This is the scheme of quantum teleportation of continuous variables (see \cite{Fu98, NB08, BN19}). 
 The postulate 2 of projective measuement plays an important role for teleportation.
\begin{rem} \label{in:rem1}
Sometimes the Bell-measurement is understood as 
the measurement of $\E ^{-\iu H_{hbs}^{01}} (x_0 \otimes p_1) \E ^{\iu H_{hbs}^{01}} \otimes I$ on the state $\vert \Phi \ran $ instead of the measurement
	$x_0 \otimes p_1 \otimes I$ on the state $(\E ^{\iu H_{hbs}^{01}} \otimes I) \vert \Phi \ran $.
	
	Postulate 1 gives the same result for both cases, but the postulate 2 gives different results.  In any case,  $\vert \Phi \ran $ and $(\E ^{\iu H_{hbs}^{01}} \otimes I) \vert \Phi \ran $ change to	
	$$
	\E ^{-\iu H_{hbs}^{01}} (\vert x_- \ran _0 \otimes \vert p_+ \ran _1) \otimes D_0(\alpha )^* \vert \psi \ran _2 \ \ 
	{\rm and} \ \  \vert x_- \ran _0 \otimes \vert p_+ \ran _1 \otimes D_0(\alpha )^* \vert \psi \ran _2
	$$
	respectively, that is, $\vert \psi \ran _0$ is teleported to $D_2(\alpha )^* \vert \psi \ran _2$.
\end{rem}
For the measurement of $x_{0}$, the balanced homodyne detection is often used \cite{Br90, Fu98, Ge11, SF18}.  

For the balanced homodyne detection, we prepare one more creation $c_3^*$ and annihilation $c_3$ operatores of a local oscillator and a half-beamsplitter
$$
  \E ^{\iu H_{hbs}^{03}}, \  H_{hbs}^{03} = \iu(\pi /4)(c_{0}^{*} c_{3} - c_{0}c_{3}^{*}) . 
$$
Let $\vert \alpha \ran _3$ be the coherent state 
$$
\vert \alpha \ran _3 = \E^{- \abs{\alpha }^2/2} \E^{\alpha c_{3}^{*}} \vert 0 \ran , \  \alpha \in \C ,
$$
where $\vert 0 \ran $ is the vacuum state.  For the measurement of $x_0$ we choose $\alpha = \abs{\alpha }$ and let $N_j$ ($j = 0, 3$) be the number operators defined by
$$
N_j = c_j^*c_j.
$$
Then $\vert n \ran _j = \frac{1}{\sqrt{n!}} c_j^{*n} \vert 0 \ran $ is an eigen-state of $N_j$ with eigen-value $n$, and
$$
\vert l + j \ran _3 \otimes \vert j \ran _0 \ (l \geq 0), \ \ \vert j \ran _3 \otimes \vert j - l \ran _0 \ (l \leq 0), \ \ j = 0, 1,  \ldots 
$$
are eigen-vectors of $N_3 - N_0$ with eigen-value $l$.

The balanced homodyne detection is the measurement of $N_3 - N_0$ on the state $e^{\iu H^{03}_{\rm hbs}}(\vert \phi \ran _0 \otimes \vert \alpha \ran _3)$, where $\vert \phi \ran _0$ is the state defined by
$$
\vert \phi \ran _0 = \phi (c_0^*) \vert 0 \ran 
$$
for an entire function $\phi (z)$ of $z$ with ${}_0 \lan \phi \vert \phi \ran _0 < \infty $.  The expectation value of this measurement is 
$$
{}_0 \lan \phi \vert \otimes {}_3 \lan \alpha \vert e^{-\iu H^{03}_{\rm hbs}} (N_3 - N_0) e^{\iu H^{03}_{\rm hbs}} \vert \phi \ran _0 \otimes \vert \alpha \ran _3
$$
which is the expectation value of 
$$
e^{-\iu H^{03}_{\rm hbs}} (N_3 - N_0) e^{\iu H^{03}_{\rm hbs}} = c_3^*c_0 + c_0^*c_3
$$
on the state $\vert \phi \ran _0 \otimes \vert \alpha \ran _3$.

Let {$\tilde{\Pi}^l$ and $\Pi^l$ be projections which appear in the spectral resolutions of $N_3 - N_0$ and $c_3^*c_0 + c_0^*c_3$ respectively, i.e.,} 
$$
N_3 - N_0 = \sum _{l = -\infty }^{\infty } l \tilde{\Pi }^l, \ c_3^*c_0 + c_0^*c_3 = \sum _{l = -\infty }^{\infty } l e^{-\iu H^{03}_{\rm hbs}} \tilde{\Pi }^l e^{\iu H^{03}_{\rm hbs}}
= \sum _{l = -\infty }^{\infty } l \Pi ^l .
$$
\begin{rem} \label{in:ncc*}
Under the projective measurement postulate, there are no differences of the measured value $l$ and the probability for obtaining $l$, between the measurement of $N_3 - N_0$ on the state $e^{\iu H^{03}_{\rm hbs}}(\vert \phi \ran _0 \otimes \vert \alpha \ran _3)$ and the measurement of $c_3^*c_0 + c_0^*c_3$ on the state $\vert \phi \ran _0 \otimes \vert \alpha \ran _3$, but the changed states are different, i.e., $\tilde{\Pi }^l e^{\iu H^{03}_{\rm hbs}}(\vert \phi \ran _0 \otimes \vert \alpha \ran _3)$ and $e^{-\iu H^{03}_{\rm hbs}} \tilde{\Pi }^l e^{\iu H^{03}_{\rm hbs}}(\vert \phi \ran _0 \otimes \vert \alpha \ran _3)$ (see (Remark \ref{in:rem1})).
\end{rem}
Note that one can deduce this
result from {the latter,} the result of projective measurement of the observable $c_3^*c_0 + c_0^*c_3$.  So, we study the measurement of the observable $c_3^*c_0 + c_0^*c_3$ on the state $\vert \phi \ran _0 \otimes \vert \alpha \ran _3$ under the framework of projective measurement.

The expectation value of the observable $c_3^*c_0 + c_0^*c_3$ on the state $\vert \phi \ran _0 \otimes \vert \alpha \ran _3$ is calculated as
$$
{}_0 \lan \phi \vert \otimes {}_3 \lan \alpha \vert (c_3^*c_0 + c_0^*c_3)\vert \phi \ran _0 \otimes \vert \alpha \ran _3
= {}_0 \lan \phi \vert (\bar{\alpha } c_0 + \alpha c_0^*) \vert \phi \ran _0 .
$$
The above formula is sometimes expressed as saying 
\beq \label{in:saying}
\begin{split}
	{\rm ``the\ local\ oscillator\ is\ strong,\ so\ that\ the\ operators\ } c_3 \ {\rm and\ } c_3^* \\
	{\rm can\ be\ replaced\ }
	{\rm by\ complex\ numbers\ } \alpha \ {\rm and\ } \bar{\alpha } \ {\rm respectively"}. 
\end{split}
\eeq
One might find the assumption (\ref{in:saying}) strange because of the following: Let $\alpha = \abs{\alpha }$.  Since the difference of the number operators haves eigen values in $\Z $, the statement  (\ref{in:saying}) seems to cause the following contradiction:
\beq \label{in:spec}
\begin{split}
	{\rm the\ operator\ } c_3^*c_0 + c_0^*c_3 \ {\rm has\ eigen\ values\ in\ } \Z ,  {\rm while\ the\ operator\ } \\ \bar{\alpha } c_0 + \alpha c_0^* = \abs{\alpha } (c_0 + c_0^*) = \sqrt{2} \abs{\alpha } x_0 \ 
	\ {\rm has\ its\ spectrum\ in\ } \R .
\end{split}
\eeq
Since the balanced homodyne detection seems to detect the number of photons, the eigen-values of $c_3^*c_0 + c_0^*c_3$, it seems impossible to make a measurement of $x_0$ by the balanced homodyne detection. 

But in \cite{Br90} it is
proven that the statement (\ref{in:saying}) is valid for the balanced homodyne measurement
in the sense that the probability of the outcome $l = x \abs{\alpha }$ for $c_3^*c_0 + c_0^*c_3$ and $\vert \phi \ran _0 \otimes \vert \alpha \ran _3$ approximates the probability of the outcome $x$ for $(1/\abs{\alpha })(\bar{\alpha } c_0 + \alpha c_0^*)$ and $\vert \phi \ran _0 $ for large $\abs{\alpha }$. 

It is convenient to use $(1/\abs{\alpha })(c_3^*c_0 + c_0^*c_3)$ and $(1/\abs{\alpha })(\bar{\alpha } c_0 + \alpha c_0^*)$ instead of $c_3^*c_0 + c_0^*c_3$ and $\bar{\alpha } c_0 + \alpha c_0^*$ for large $\abs{\alpha } \rightarrow \infty $.

Let $\alpha = \abs{\alpha } \E ^{\iu \theta }$ and $\vert \theta ; r\ran $ be the generalized eigen-vector of $\E^{-\iu \theta } c_0 + \E^{\iu \theta }c_0^*$ with eigen-value $r$ such that
$$
\frac{1}{\sqrt{2 \pi }} \int _{-\infty }^{\infty } \vert \theta ; r \ran _0
{}_0 \lan \theta ; r \vert  dr = I.
$$
The spectral resolution of $(1/\abs{\alpha })(c_3^*c_0 + c_0^*c_3)$ is $\sum _{l = - \infty }^{\infty } (l/\abs{\alpha }) \Pi ^l$ and the projection onto the space spanned by the eigen-states whose eigen-values $l/\abs{\alpha }$ are less than $x$ is
$$
\sum _{-\infty < l \leq x\abs{\alpha }}  \Pi ^l 
= \int _{- \infty }^{x \abs{\alpha }} \Pi ^{[s]}  ds
= \int _{- \infty}^x \abs{\alpha }  \Pi ^{[r \abs{\alpha }]}  dr ,
$$
which corresponds to 
$$
\frac{1}{\sqrt{2 \pi }} \int _{-\infty }^x \vert \theta ; r \ran _0
{}_0 \lan \theta ; r \vert  dr,
$$
where $[z]$ denotes the integer part of $z$.

In \cite{Br90}, Braunstein calculated the probability distribution $P(x)$ of the outcome $x = l/\abs{\alpha }$ for $(c_3^*c_0 + c_0^*c_3)/\abs{\alpha }$ and $\vert \beta \ran _0 \otimes \vert \alpha \ran _3$ by calculating the characteristic function $\chi (k) = ({}_0 \lan \beta \vert \otimes {}_3 \lan \alpha \vert )\exp \{ \iu k(c_3^*c_0 + c_0^*c_3)/\abs{\alpha }\} (\vert \beta \ran _0 \otimes \vert \alpha \ran _3)$ and its Fourier transformation, and he showed that 
$P(x)$ approximates 
\beq \label{in:braun}
  \frac{1}{\sqrt{2 \pi }} \E ^{-\{ x - (\bar{\alpha } \beta  + \alpha \bar{\beta })/\abs{\alpha }\} ^2/2}, \ 
  x = l/\abs{\alpha }
\eeq
well as $\abs{\alpha } \rightarrow \infty $.  It is easily seen that $\frac{1}{\sqrt{2 \pi }} \E ^{-\{ x - (\bar{\alpha } \beta  + \alpha \bar{\beta })/\abs{\alpha }\} ^2/2} = {}_0 \lan \beta \vert \theta ; x \ran _0
{}_0 \lan \theta ; x \vert \beta \ran _0 $ by using the expression (\ref{ho:eigenvec}), where $(\bar{\alpha } c_0 + \alpha c_0^*)/\abs{\alpha } = \int _{-\infty }^{\infty } x \vert \theta ; x \ran _0
{}_0 \lan \theta ; x \vert dx$ is the spectral resolution of $(\bar{\alpha } c_0 + \alpha c_0^*)/\abs{\alpha }$ for $\alpha = \E ^{\iu \theta } \abs{\alpha }$.  Thus the balanced homodyne detection fulfill the postulate 1 for the projective measurement.

However, in \cite{Br90} Braunstein did not study 
the postulate 2 of projective measurement, that is, does the collapse of $\vert \phi \ran _0 \otimes \vert \alpha \ran _3$ to an eigen-vector with eigen-value $l$ approximate the collapse of $\vert \phi \ran _0$ to an eigen-vector with eigen-value $x$?
The postulate 2 of projective measurement plays a crucial role for the theory of quantum teleportation.  We will prove the convergence of $\abs{\alpha }{}_3 \lan \alpha \vert \Pi ^{[x \abs{\alpha }]} \vert \alpha \ran _3 \rightarrow \vert \theta ; x \ran _0 {}_0 \lan \theta ; x \vert $ as $\abs{\alpha } \rightarrow \infty $, and give the affirmative answer to this question.
Our proof gives also the approximation of the probability distribution.  That is, our proof is an alternative proof of that given in \cite{Br90}, and this proof is elementary and direct (not via characteristic function).

In order to study the change of the state, we want to show that
\beq \label{in:strongcon}
\abs{\alpha } \Pi ^l (\vert \phi \ran _0 \otimes \vert \alpha \ran _3) \rightarrow  \frac{1}{\sqrt{2 \pi }} \vert \theta ; x \ran _0 
{}_0 \lan \theta ; x \vert \phi \ran _0 \otimes \vert \alpha \ran _3 
\eeq
as $\abs{\alpha } \rightarrow \infty $ for $l = [x \abs{\alpha }]$, where $[z]$ is the integer part of $z$.

In Section \ref{holomorph}, we give a brief review of the holomorphic representation of creation $c_j$ and annihilation $c_j^*$ ($j=0, 1, \ldots , n$) operators, where the coherent states $\vert \alpha \ran _2$ and $\vert \theta ; x \ran _1 {}_1 \lan \theta ; x \vert $ are represented by
$$
\E^{- \abs{\alpha }^2/2} \E^{\alpha \bar{u}_2} \ \ {\rm and} \ \ 
\E ^{x \E ^{\iu \theta } \bar{u}_1} \E ^{ - \E ^{2 \iu \theta } \bar{u}_1^2/2} \E ^{-x^2/4} 
\overline{\E ^{x \E ^{\iu \theta } \bar{v}_1}\E ^{ - \E ^{2 \iu \theta } \bar{v}_1^2/2} \E ^{-x^2/4}}
$$ 
respecctively.

In Sections \ref{balance} and  \ref{direct}, we use only $c_0$ and $c_3$.  So we change the subscripts $0, 3$ to $1, 2$, i.e., we replace $c_0$ and $c_3$ with $c_1$ and $c_2$ respectively.

In Section \ref{balance}, we introduce a set of vectors
$$
\vert m, n \ran  = \frac{1}{\sqrt{2^m 2^n m! n!}} (\bar{u}_1 + \bar{u}_2)^m(- \bar{u}_1 + \bar{u}_2)^n, \ \ 0 \leq m, n \in {\mathbb Z} 
$$
which constitutes an orthonormal basis of the Fock space $\h _1 \hat{\otimes } \h _2$ of $c_1$ and $c_2$, and show that 
the orthogonal projection $\Pi ^l$ onto the eigen space of $c_2^*c_1 + c_1^*c_2$ with eigen value $l$ is 
$$
\Pi ^l = \sum _{j = 0}^{\infty } \vert l + j, j \ran \lan l + j, j \vert \ \ ({\rm for } \ \ l \geq 0 ), \ \
\Pi ^l = \sum _{j = 0}^{\infty } \vert j, j - l \ran \lan j, j - l \vert \ \ ({\rm for } \ \ l \leq 0 ).
$$

Moreover, we introduce generalized vectors
$$
\vert \varphi , l \ran  = \sum _{j=0}^{\infty } \E ^{\iu j \varphi } \vert l+j, j \ran \ \ ({\rm for } \ \ l \geq 0 ), \ \ 
\vert \varphi , l \ran  = \sum _{j=0}^{\infty } \E ^{\iu j \varphi } \vert l, j - l \ran \ \ ({\rm for } \ \ l \leq 0 )
$$
and show
$$
\frac{1}{2 \pi } \int _{- \pi}^{\pi } \vert \varphi , l \ran \lan \varphi , l \vert d \varphi 
= \Pi ^l .
$$
For 
$\alpha = \E^{\iu \theta } \abs{\alpha }$, 
consider
\begin{flalign}  \label{in:chapo} 
{}&{}_2\lan \alpha \vert \varphi , 0 \ran = \E^{- \vert \alpha \vert ^2/2} \sum _{j=0}^{\infty } \E^{\iu j \varphi } \frac{1}{2^j j!} (- \bar{u}_1^2 + \bar{\alpha }^2)^j  \nonumber \\
{}&= \E^{- \vert \alpha \vert ^2/2} f(\varphi , 0, \bar{u}_1, \E^{-\iu \theta } \abs{\alpha }) 
= \E^{- \vert \alpha \vert ^2/2} \E ^{\E ^{\iu \varphi }(- \bar{u}_1^2 + \E^{-2\iu \theta } \abs{\alpha }^2)/2} \\
{}&= \E ^{-\E ^{\iu \varphi }\bar{u}_1^2} \E^{- \vert \alpha \vert ^2/2} \E ^{\E ^{\iu (\varphi - 2\theta )}\abs{\alpha }^2/2}
= \E ^{-\E ^{\iu \varphi }\bar{u}_1^2} \E^{- \vert \alpha \vert ^2/2} \sum _{j=0}^{\infty } \E^{\iu j (\varphi - 2\theta )} \frac{(\abs{\alpha }^2/2)^j}{j!}  \nonumber 
\end{flalign}
which follows from
$$
\vert \varphi , 0 \ran = \sum _{j=0}^{\infty } \E ^{\iu j \varphi } \vert j, j \ran 
= \sum _{j=0}^{\infty } \E^{\iu j \varphi } \frac{1}{2^j j!} (- \bar{u}_1^2 + \bar{u}_2^2)^j  
= \E ^{\E ^{\iu \varphi }(- \bar{u}_1^2 + \bar{u}_2^2)/2} .
$$
and ${}_2 \lan \alpha \vert f \ran = \E^{- \abs{\alpha }^2/2} f(\bar{\alpha })$.

If $\E ^{\iu \varphi } = \E^{2\iu \theta }$ then
$$
\E^{- \vert \alpha \vert ^2/2} \E ^{\E ^{\iu \varphi }(- \bar{u}_1^2 + \E^{-2\iu \theta } \abs{\alpha }^2)/2}
= \E ^{- \E^{2\iu \theta } \bar{u}_1^2/2} .
$$ 
It is interesting that $\E ^{- \E^{2\iu \theta } \bar{u}_1^2/2}$ is a generalized eigen-function of $\E^{-\iu \theta } c_1 + \E^{\iu \theta }c_1^*$ with eigen-value $0$.
For other $\varphi $,
$$
{}_2\lan \alpha \vert \varphi , 0 \ran 
= \E ^{-\E ^{\iu \varphi }\bar{u}_1^2} \psi _{P}(t)
= \E ^{-\E ^{\iu \varphi }\bar{u}_1^2} \E ^{(\E^{\iu t} - 1) m} \rightarrow 0, \ t = \varphi - 2\theta 
$$
as $m = \abs{\alpha }^2/2 \rightarrow \infty $, where $\psi _{P}(t)$ is the characteristic function of the Poisson distribution $P(j; m) = \E ^{-m} \frac{m^j}{j!}$.  This suggests the convergence 
\beq \label{in:charg}
  \abs{\alpha } {}_2 \lan \alpha \vert \Pi ^0 \vert \alpha \ran _2 = \frac{\abs{\alpha }}{2 \pi } \int _{2\theta - \pi}^{2\theta + \pi } {}_2\lan \alpha \vert \varphi , 0 \ran \lan \varphi , 0 \vert \alpha \ran _2 d \varphi 
  \rightarrow \frac{1}{\sqrt{2\pi }} \E ^{- \E^{2\iu \theta } \bar{u}_1^2/2} \overline{\E ^{- \E^{2\iu \theta } \bar{v}_1^2/2}}
\eeq
as $\abs{\alpha } \rightarrow \infty $, since the Poisson distribution $P(j; m) = \E ^{-m} \frac{m^j}{j!}$ approximates the Gaussian distribution $G(j; m, \sigma ^2) = \frac{1}{\sqrt{2 \pi m}} \E ^{- (j - m)^2/2 \sigma ^2}$ ($\sigma ^2 = m = \abs{\alpha }^2/2$) well and 
$$
  (\sqrt{2m}/\sqrt{2\pi }) \overline{\psi _{P}(t)} \psi _{P}(t) \approx (\sqrt{2m}/\sqrt{2\pi }) \overline{\psi _{G}(t)} \psi _{G}(t) = (\sqrt{2m}/\sqrt{2\pi }) \E ^{- m t^2} \rightarrow \delta (t)
$$ as $m \rightarrow \infty $. In  Corollary \ref{di:cordirac}, the relation $(\sqrt{2m}/\sqrt{2\pi }) \overline{\psi _{P}(t)} \psi _{P}(t) \rightarrow \delta (t)$ is proved without using the relation between $P(j; m)$ and $G(j; m, \sigma ^2)$.

In Section \ref{direct}, we will prove the convergence
\beq \label{ih:formulamain}
\abs{\alpha } {}_2 \lan \alpha \vert \Pi ^l \vert \alpha \ran _2
\rightarrow \frac{1}{\sqrt{2 \pi }} \E ^{-x^2/4} \E ^{x \E ^{\iu \theta } \bar{u}_1} \E ^{ - \E ^{2 \iu \theta } \bar{u}_1^2/2}
\overline{\E ^{-x^2/4} \E ^{x \E ^{\iu \theta } \bar{v}_1}\E ^{ - \E ^{2 \iu \theta } \bar{v}_1^2/2}} ,
\eeq
as $\abs{\alpha } \rightarrow \infty $ for fixed $x \in \R$ such that $l = [x\abs{\alpha }]$, the main theorem, Theorem \ref{di:main2} which implies (\ref{in:strongcon}).  For the proof we compare the formula
$$
{}_2\lan \alpha \vert \varphi , l \ran = \E ^{- \abs{\alpha }^2/2} \left[ 
\sum _{j=0}^{\infty } \E ^{\iu j \varphi } \frac{\sqrt{j!}}{\sqrt{2^l(l+j)!}} (\bar{u}_1 + \bar{\alpha })^l \frac{1}{2^j j!} (- \bar{u}_1^2 + \bar{\alpha }^2)^j \right] 
$$
with (\ref{in:chapo}).  The terms $\frac{\sqrt{j!}}{\sqrt{2^l(l+j)!}} (\bar{u}_1 + \bar{\alpha })^l$ will change ${}_2\lan \alpha \vert \varphi , 0 \ran $ into ${}_2\lan \alpha \vert \varphi , l \ran $.  We find a useful formula (\ref{di:psum}):
$$
\frac{(\bar{u}_1 + \bar{\alpha })^l}{(- \bar{u}_1^2 + \bar{\alpha }^2)^{l/2}} \E^{- \vert \alpha \vert ^2/2} \sum _{j=l/2}^{\infty } \E^{\iu j \varphi } 
\frac{j!}{\sqrt{(j+l/2)! (j - l/2)!}}
\frac{1}{2^j j!} (- \bar{u}_1^2 + \bar{\alpha }^2)^j = \E^{\iu l \varphi /2} {}_2\lan \alpha \vert \varphi , l \ran .
$$

$\frac{(\bar{u}_1 + \bar{\alpha })^l}{(- \bar{u}_1^2 + \bar{\alpha }^2)^{l/2}}$ gives the term
$\lim _{\abs{\alpha } \rightarrow \infty } \frac{(\bar{u}_1 + \bar{\alpha })^l}{(- \bar{u}_1^2 + \bar{\alpha }^2)^{l/2}}
= \E ^{x \E ^{\iu \theta }\bar{u}_1}$, where we use the formula $\lim _{n \rightarrow \infty }(1 + a/n)^n = \E ^a$, and $\frac{j!}{\sqrt{(j+l/2)! (j - l/2)!}}$ will give the normalization term $\E ^{-x^2/4}$.  This can be seen by Chebyshev's inequality for Poisson distribution (see Lemma \ref{di: cpsum}) and  $\lim _{m \rightarrow \infty } \frac{j!}{\sqrt{(j+l/2)! (j - l/2)!}} = \E ^{-x^2/4\mu }$ for $l = x\sqrt{2m}$ and $j = \mu m$ which follows from the Stirling's approximation $n! \sim \sqrt{2 \pi n} (n/\E )^n$ and the formula $\lim _{n \rightarrow \infty }(1 + a/n)^n = \E ^a$.
The Chebyshev's inequality shows that the summation $\sum _{0 \leq j < m - \lambda \sqrt{m}} + \sum _{m + \lambda \sqrt{m} < j}$ is small and $\sum _{j=l/2}^{\infty }$ can be replaced by 
$\sum _{j=m - \lambda \sqrt{m}}^{m + \lambda \sqrt{m}}$ for large $\lambda $.  If $j \in [m - \lambda \sqrt{m}, m + \lambda \sqrt{m}]$, then $\mu = j/m \in [1 - \lambda /\sqrt{m}, 1 + \lambda /\sqrt{m}]$.  So, $\E ^{-x^2/4\mu }$ can be replaced by $\E ^{-x^2/4}$ for large $m = \abs{\alpha }^2/2$. 
In this way, we have the main proposition Proposition \ref{di:mainprop}:
\begin{align*}
&\E^{\iu l \varphi /2} {}_2\lan \alpha \vert \varphi , l \ran \approx \E ^{x \E ^{\iu \theta }\bar{u}_1} \E ^{-x^2/4} \E^{- \vert \alpha \vert ^2/2} \sum _{j=l/2}^{\infty } \E^{\iu j \varphi } \frac{1}{2^j j!} (\bar{u}_1^2 + \E ^{-2\iu \theta }\abs{\alpha }^2)^j
\\
&\approx \E ^{x \E ^{\iu \theta }\bar{u}_1} \E ^{- \E ^{\iu \varphi } \bar{u}_1^2/2} \E ^{-x^2/4} \E^{- \vert \alpha \vert ^2/2} \sum _{j=0}^{\infty } \E^{\iu j (\varphi - 2\theta )} \frac{\abs{\alpha }^2/2}{j!} 
\end{align*}
for sufficiently large $l = x\abs{\alpha }$.
The above formula and Corollary \ref{di:cordirac} imply the formula (\ref{ih:formulamain}): Theorem \ref{di:main2}.

In Section \ref{conclusion}, It is shown that the balanced homodyne measurement causes the quantum teleportation.

\section{Holomorphic representation of CCR} \label{holomorph}

Later we will use the holomorphic representation of the canonical commutation relation defined below {(see \cite{FS80})}.
\begin{defn} \label{di:defholo}
	Consider the Hilbert space $L^2(\C ^n, d\mu _n)$ with
	$$
	u_j = x_j + \iu y_j, \ \ d \mu _n = \prod _{j=1}^n \E ^{-\bar{u}_ju_j} \frac{d\bar{u}_j du_j}{2\pi \iu }
	= \prod _{j=1}^n \E ^{-(x_j^2 + y_j^2)} \frac{d x_j d y_j}{\pi }, \ \ 
	\int _{\C ^n} d \mu _n = 1
	$$
	and its subspace $\h$ generated by holomorphic functions of $\bar{u} = (\bar{u}_1, \ldots , \bar{u}_n)$ (anti-holomorphic functions of $u$).  For $f, g \in \h $ define the inner product by
	$$
	\lan f \vert g \ran = \int _{\C ^n} \overline{f(\bar{u})} g(\bar{u}) d \mu _n = \int _{\C ^n} \overline{f(\bar{u})} g(\bar{u}) \prod _{j=1}^n \E ^{-\bar{u}_ju_j} \frac{d\bar{u}_j du_j}{2\pi \iu } .
	$$
	Then the multiplication operator $c_j^*: f(\bar{u}) \rightarrow \bar{u}_j f(\bar{u})$ and the differentiation 
	operator $c_j: f(\bar{u}) \rightarrow \partial f(\bar{u})/ \partial \bar{u}_j$ ($\partial /\partial \bar{u}_j = (1/2)(\partial /\partial x_j + \iu \partial /\partial y_j)$) are adjoint to each other and satisfy the canonical commutation relations (CCR)
	$$
	[c_j, c_k] = [c_j^*, c_k^*] = 0, \ \ [c_j, c_k^*] = \delta _{j, k} .
	$$
	This representation of the commutation relations is called the holomorphic representation.
\end{defn}
\begin{rem} \label{di:remcons}
	$\{ \bar{u}_1^{k_1} \cdots \bar{u}_n^{k_n} /\sqrt{k_1! \cdots k_n!} \}_{k_1, \ldots k_n =0}^{\infty }$ is a complete orthonormal system of $\h$.  The following calculations show the orthonormality: If $k \leq l$ then we have
\begin{align*}
	&\int _{\C ^n} \overline{\bar{u}_j^k} \bar{u}_j^l d\mu _n = \int _{\C ^n} u_j^k \bar{u}_j^l d\mu _n
	= \int _{\C ^n} u_j^k \bar{u}_j^l \prod _{j=1}^n \E ^{-\bar{u}_ju_j} \frac{d\bar{u}_j du_j}{2\pi \iu }
	\\
	&= \int _{\C ^n} u_j^k \left( - \frac{\partial }{\partial u_j} \right) ^l \prod _{j=1}^n \E ^{-\bar{u}_ju_j} \frac{d\bar{u}_j du_j}{2\pi \iu }
	= \int _{\C ^n} \left(  \frac{\partial ^l}{(\partial u_j )^l}  u_j^k \right) \prod _{j=1}^n \E ^{-\bar{u}_ju_j} \frac{d\bar{u}_j du_j}{2\pi \iu } = k! \delta _{k, l},
	\end{align*}
	if $k > l$ we calculate
	$$
	\int _{\C ^n} \bar{u}_j^l \left( - \frac{\partial }{\partial \bar{u}_j} \right) ^k \prod _{j=1}^n \E ^{-\bar{u}_ju_j} \frac{d\bar{u}_j du_j}{2\pi \iu }
	=   \int _{\C ^n} \left( \frac{\partial ^k}{(\partial \bar{u}_j)^k} \bar{u}_j^l  \right) \prod _{j=1}^n \E ^{-\bar{u}_ju_j} \frac{d\bar{u}_j du_j}{2\pi \iu } = 0.
	$$
	The completeness follows from the fact that  holomorphic functions $f(\bar{u})$ of $\bar{u}$ on $\C ^n$ have power series expansions
	$$
	f(\bar{u}) = \sum _{k_1, \ldots , k_n = 0}^{\infty } c_{k_1 \ldots k_n} \bar{u}_1^{k_1} \cdots \bar{u}_n^{k_n} .$$
\end{rem}
\begin{prop} \label{di:substitute}
	Let $\bar{\alpha }\cdot u = \sum _{j=1}^n \bar{\alpha }_j u_j$.  Then we have
	$$
	\int _{\C ^n} \E ^{\bar{\alpha }\cdot u} f(\bar{u}) d \mu _n = f(\bar{\alpha }) .
	$$
\end{prop}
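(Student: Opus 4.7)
The plan is to recognize the left-hand side as a Hilbert-space inner product in $\h$. Since $\overline{\E^{\bar{\alpha}\cdot u}} = \E^{\alpha \cdot \bar{u}}$ is itself a holomorphic function of $\bar{u}$ (with the Gaussian weight ensuring square-integrability, in fact $\norm{\E^{\alpha \cdot \bar u}}^2 = \E^{\abs{\alpha}^2}$), it belongs to $\h$, so
$$
\int_{\C^n} \E^{\bar{\alpha}\cdot u}\, f(\bar{u})\, d\mu_n \;=\; \lan \E^{\alpha \cdot \bar u} \mid f \ran.
$$
Viewed in this way, the proposition is simply the reproducing-kernel property of the Fock--Bargmann space $\h$, with $\E^{\alpha\cdot\bar u}$ playing the role of the coherent/reproducing kernel evaluated at $\bar{\alpha}$.

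I would then verify this by a direct termwise computation. Using the power series
$$
\E^{\alpha \cdot \bar u} \;=\; \sum_{k_1,\ldots,k_n \geq 0} \frac{\alpha_1^{k_1} \cdots \alpha_n^{k_n}}{k_1! \cdots k_n!}\, \bar u_1^{k_1} \cdots \bar u_n^{k_n},
$$
the expansion $f(\bar u) = \sum_{l_1,\ldots,l_n} c_{l_1\ldots l_n}\, \bar u_1^{l_1}\cdots \bar u_n^{l_n}$ furnished by the completeness statement in Remark \ref{di:remcons}, and the orthogonality identity $\int_{\C^n} u_j^{k}\bar u_j^{l}\, d\mu_n = k!\,\delta_{k,l}$ established there (applied coordinate by coordinate and multiplied over $j$), the cross terms with mismatched multi-indices vanish, and the diagonal terms collapse to
$$
\lan \E^{\alpha\cdot\bar u}\mid f\ran \;=\; \sum_{k_1,\ldots,k_n} c_{k_1\ldots k_n}\, \bar{\alpha}_1^{k_1}\cdots \bar{\alpha}_n^{k_n} \;=\; f(\bar{\alpha}).
$$

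The only genuinely technical point is the justification of interchanging the two summations with the integral. I would handle this by first proving the identity for the polynomial partial sums $f_N$ of $f$, where the orthogonality gives the result algebraically, and then passing to the limit via the Cauchy--Schwarz bound $\abs{\lan \E^{\alpha\cdot\bar u}\mid f-f_N\ran} \leq \norm{\E^{\alpha\cdot\bar u}}\,\norm{f-f_N}$, using $\norm{\E^{\alpha\cdot\bar u}}^2 = \E^{\abs{\alpha}^2}$ and the assumed $L^2(d\mu_n)$-convergence $f_N\to f$ in $\h$. No step of this argument looks hard; the identity is essentially a restatement of the coherent-state reproducing formula, and its value will be that it can be applied freely in the subsequent sections whenever a coherent-state matrix element has to be evaluated.
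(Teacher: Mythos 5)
Your proof is correct and follows essentially the same route as the paper: expand the exponential in its power series, integrate termwise against the monomials of $f$ using the orthogonality relations $\int_{\C^n} u_j^k \bar{u}_j^l\, d\mu_n = k!\,\delta_{k,l}$ from Remark \ref{di:remcons}, and collect the diagonal terms to recover $f(\bar{\alpha})$. The only difference is presentational: you add the reproducing-kernel interpretation and a Cauchy--Schwarz justification of the interchange of summation and integration, which the paper performs implicitly.
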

\begin{proof}
	It follows from the equalities
	\begin{align*}
&	\int _{\C ^n} \E ^{\bar{\alpha }\cdot u} \bar{u}_1^{k_1} \cdots \bar{u}_n^{k_n} d \mu _n
	= \prod _{j=1}^n \int _{\C } \E ^{\bar{\alpha }_j u_j} \bar{u}_j^{k_j} \E ^{-\bar{u}_ju_j} \frac{d\bar{u}_j du_j}{2\pi \iu }
	\\
&	= \prod _{j=1}^n \int _{\C } \sum _{l_j = 0}^{\infty} \frac{(\bar{\alpha }_j u_j)^{l_j}}{l_j !} \bar{u}_j^{k_j} \E ^{-\bar{u}_ju_j} \frac{d\bar{u}_j du_j}{2\pi \iu } = \prod _{j=1}^n \bar{\alpha }_j^{k_j} = \bar{\alpha }_1^{k_1} \cdots \bar{\alpha }_n^{k_n}
	\end{align*}
	that
	\begin{align*}
	\int _{\C ^n} \E ^{\bar{\alpha }\cdot u} f(\bar{u}) d \mu _n &= \int _{\C ^n} \E ^{\bar{\alpha }\cdot u} \sum _{k_1, \ldots , k_n = 0}^{\infty } c_{k_1 \ldots k_n} \bar{u}_1^{k_1} \cdots \bar{u}_n^{k_n} d \mu _n
	\\
&= \sum _{k_1, \ldots , k_n = 0}^{\infty } c_{k_1 \ldots k_n} \bar{\alpha }^{k_1} \cdots \bar{\alpha }^{k_n} = f(\bar{\alpha }) .
\end{align*}
\end{proof}
\begin{cor} \label{di:idop}
	$\E ^{\bar{u} \cdot v}$ is the integral kernel of the identity operator.
\end{cor}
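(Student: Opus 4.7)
The plan is to recognize this corollary as a direct consequence of Proposition \ref{di:substitute} under a relabelling. An integral kernel $K(\bar{u}, v)$ for an operator $T$ on $\h$ is meant to act by $(Tf)(\bar{u}) = \int_{\C^n} K(\bar{u}, v) f(\bar{v}) d\mu_n(v)$, so asserting that $\E^{\bar{u} \cdot v}$ is the kernel of the identity amounts to establishing
\[
f(\bar{u}) = \int_{\C^n} \E^{\bar{u} \cdot v} f(\bar{v}) d\mu_n(v)
\]
for every $f \in \h$ and every $u \in \C^n$.

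To carry this out, I would fix $u \in \C^n$ arbitrarily, regard $\bar{u}$ as the constant vector $\bar{\alpha}$ appearing in Proposition \ref{di:substitute}, and rename the integration variable from $u$ to $v$. The right-hand side then matches the hypothesis of Proposition \ref{di:substitute} verbatim, and its conclusion yields $f(\bar{u})$. Since $u \in \C^n$ was arbitrary, the identity holds pointwise on $\C^n$, and hence the operator determined by the kernel $\E^{\bar{u} \cdot v}$ coincides with the identity on $\h$.

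There is no substantive obstacle here; the only minor caveat is purely notational. The symbol $\bar{u}$ in the kernel $\E^{\bar{u} \cdot v}$ plays the role of the free output variable (the analogue of $\bar{\alpha}$ in Proposition \ref{di:substitute}), while $v$ is the integration variable (the analogue of $u$ in that proposition). Once this relabelling is made explicit, the corollary is an immediate specialisation, and no further analytic work such as interchange of sum and integral, convergence estimates, or density arguments is required beyond what was already done in the proof of Proposition \ref{di:substitute}.
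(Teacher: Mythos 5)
Your proposal is correct and is essentially identical to the paper's own proof, which likewise obtains $f(\bar{u}) = \int_{\C^n} \E^{\bar{u}\cdot v} f(\bar{v})\, d\mu_n(v)$ by applying Proposition \ref{di:substitute} with $\bar{\alpha}$ replaced by $\bar{u}$ and the integration variable renamed to $v$. No further comment is needed.
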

\begin{proof}
	$$
	\h \ni f(\bar{v}) \mapsto \int _{\C ^n} \E ^{\bar{u} \cdot v} f(\bar{v}) \prod _{j=1}^n \E ^{-\bar{v}_jv_j} \frac{d\bar{v}_j dv_j}{2\pi \iu } = f(\bar{u}) \in \h .
	$$
\end{proof}
\begin{rem} \label{in:costrep}
The holomorphic representation of the coherent state
$$
	\prod _{j=1}^n \E^{- \abs{\alpha _j}^2/2} \E^{\alpha _j c_j^*} \vert 0 \ran , \  \alpha _j \in \C 
	$$ is $$
	\vert \alpha \ran = \E^{- \abs{\alpha }^2/2} \E^{\alpha \cdot \bar{u}} , \ \abs{\alpha }^2 = \bar{\alpha } \cdot \alpha 
$$
and it follows from Proposition \ref{di:substitute} that
$$
	\lan \alpha \vert f \ran = \E^{- \abs{\alpha }^2/2} f(\bar{\alpha })
$$
for $\vert f \ran $ represented by the holomorphic function $f(\bar{u})$ of $\bar{u}$ in the holomorphpic representation.  If $A(\bar{u}, v)$ is the kernel function of an operator $A$ in the holomorphic representation, then $\lan \alpha \vert A \vert \beta \ran = \E^{- \abs{\alpha }^2/2} \E^{- \abs{\beta }^2/2} A(\bar{\alpha }, \beta )$.
	
	Therefore, the holomorphic representation is sometimes called the coherent-state representation.
\end{rem}
The following equalities for the coherent state follow from Proposition \ref{di:substitute} and Corollary \ref{di:idop}.
\beq
  \lan \alpha \vert \beta \ran = \int _{\C ^n} \E^{- \abs{\alpha }^2/2} \E^{\bar{\alpha } \cdot u} 
  \E^{- \abs{\beta }^2/2} \E^{\beta \cdot \bar{u}} \prod _{j=1}^n \E ^{-\bar{u}_ju_j} \frac{d\bar{u}_j du_j}{2\pi \iu } = \E^{- \abs{\alpha }^2/2} \E^{- \abs{\beta }^2/2} \E^{\bar{\alpha } \cdot \beta },
\eeq
\beq \label{di:idop1}
  \int _{\C ^n} \vert \alpha \ran \lan \alpha \vert \prod _{j=1}^n \frac{d\bar{u}_j du_j}{2\pi \iu }
  = \int _{\C ^n} \E^{\alpha \cdot \bar{u}} \E^{\bar{\alpha } \cdot v} 
  \prod _{j=1}^n \E ^{-\bar{\alpha }_j \alpha _j} \frac{d\bar{\alpha }_j d\alpha _j}{2\pi \iu }
  = \E ^{\bar{u} \cdot v} = I .
\eeq
\begin{prop}
The operator 
\beq \label{di:opxitheta}
  \xi _j(\theta ) = \E^{-\iu \theta } c_j + \E^{\iu \theta }c_j^* = \E^{-\iu \theta } \frac{\partial }{\partial \bar{u}_j}  + \E^{\iu \theta } \bar{u}_j .
\eeq
has
\beq \label{ho:eigenvec}
  \vert \theta ; r\ran _j = \E ^{- r^2 /4} \E ^{r \E ^{\iu \theta } \bar{u}_j } \E ^{ - \E ^{2 \iu \theta } \bar{u}_j^2/2}
\eeq
as an eigen-vector 
with eigen-value $r$, and these eigen-vectors satisfy
\beq 
  \int _{-\infty }^{\infty } \vert \theta ; r\ran _j {}_j \lan \theta ; r \vert  dr
  = \sqrt{2 \pi } I
\eeq
and
\beq \label{ho:cons}
   {}_j \lan \theta ; r  \vert \theta ; s\ran _j = \sqrt{2 \pi } \delta (r - s).
\eeq

\end{prop}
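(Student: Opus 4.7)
The plan is to verify the three assertions by direct calculation in the holomorphic representation, using the machinery already established in Proposition \ref{di:substitute} and Corollary \ref{di:idop}.

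For the eigenvalue relation, I would simply apply $\xi_j(\theta) = \E^{-\iu\theta}\partial/\partial\bar{u}_j + \E^{\iu\theta}\bar{u}_j$ to the proposed eigenvector. The $\bar{u}_j$-derivative of the product $\E^{r\E^{\iu\theta}\bar{u}_j}\E^{-\E^{2\iu\theta}\bar{u}_j^2/2}$ produces a factor $(r\E^{\iu\theta} - \E^{2\iu\theta}\bar{u}_j)$; multiplying by $\E^{-\iu\theta}$ and adding the multiplication term $\E^{\iu\theta}\bar{u}_j$ gives exactly $r$, with no residual contribution from the overall normalizing factor $\E^{-r^2/4}$. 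This step is routine.

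For the completeness relation, I would read off the integral kernel of $\vert\theta;r\rangle_j{}_j\langle\theta;r\vert$ in the holomorphic representation as
\[
K_r(\bar{u}_j,v_j) \;=\; \E^{-r^2/2}\,\E^{r(\E^{\iu\theta}\bar{u}_j+\E^{-\iu\theta}v_j)}\,\E^{-\E^{2\iu\theta}\bar{u}_j^2/2}\,\E^{-\E^{-2\iu\theta}v_j^2/2},
\]
and then complete the square in $r$. The Gaussian $r$-integral yields the factor $\sqrt{2\pi}\,\E^{(\E^{\iu\theta}\bar{u}_j+\E^{-\iu\theta}v_j)^2/2}$. Expanding this square produces precisely $\E^{\E^{2\iu\theta}\bar{u}_j^2/2}\,\E^{\bar{u}_jv_j}\,\E^{\E^{-2\iu\theta}v_j^2/2}$, which cancels the two quadratic Gaussian factors in the kernel, leaving $\sqrt{2\pi}\,\E^{\bar{u}_jv_j}$. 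By Corollary \ref{di:idop} this is the kernel of $\sqrt{2\pi}\,I$, as required.

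For the delta normalization (\ref{ho:cons}), I would first absorb the phase by the change of variable $u_j\mapsto\E^{\iu\theta}u_j$, which is unitary on $L^2(\C,d\mu_1)$ since it preserves $\bar{u}_ju_j$ and the measure. This removes $\theta$ from the integrand, leaving an integral of $\E^{rv_j+s\bar{v}_j-v_j^2/2-\bar{v}_j^2/2-\bar{v}_jv_j}$ against $d\mu_1$. Writing $v_j=x+\iu y$, the exponent collapses to $-2x^2+(r+s)x+\iu(r-s)y$; the $y$-integration produces $2\pi\delta(r-s)$, and the remaining $x$-Gaussian, evaluated at $r=s$ and combined with the prefactor $\E^{-(r^2+s^2)/4}$, yields exactly $\sqrt{2\pi}$.

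The main obstacle I anticipate is the bookkeeping in the completeness step: one must verify that the cross-term expansion $(\E^{\iu\theta}\bar{u}_j+\E^{-\iu\theta}v_j)^2=\E^{2\iu\theta}\bar{u}_j^2+2\bar{u}_jv_j+\E^{-2\iu\theta}v_j^2$ cancels the quadratic factors \emph{precisely}, including the phase dependence, so that only the bare identity kernel $\E^{\bar{u}_jv_j}$ survives. Once this cancellation is confirmed, the three statements follow immediately.
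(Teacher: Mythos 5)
Your proposal is correct and follows essentially the same route as the paper: verify the eigenvalue equation by direct differentiation, obtain the completeness relation by performing the Gaussian $r$-integral of the kernel and recognizing $\sqrt{2\pi}\,\E^{\bar{u}_jv_j}$ as the kernel of $\sqrt{2\pi}\,I$ via Corollary \ref{di:idop}, and reduce the orthogonality integral to a real Gaussian/Fourier integral giving $\sqrt{2\pi}\,\delta(r-s)$ after a phase rotation that leaves $\E^{-\bar{u}_ju_j}\,d\bar{u}_j\,du_j$ invariant. The only cosmetic difference is that the paper anchors the computation at $\theta=\pi/2$ and then rotates, whereas you complete the square for general $\theta$ directly; both amount to the same Gaussian bookkeeping.
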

\begin{proof}
It is easy to see that
$$
\left( \frac{\partial }{\partial \bar{u}_j}  +  \bar{u}_j \right) \E ^{r \bar{u}_j } \E ^{ - \bar{u}_j^2/2}
= r \E ^{r \bar{u}_j } \E ^{ - \bar{u}_j^2/2} ,
$$
and
$$
\left( \E^{-\iu \theta } \frac{\partial }{\partial \bar{u}_j}  + \E^{\iu \theta } \bar{u}_j \right) 
\E ^{r \E ^{\iu \theta } \bar{u}_j } \E ^{ - \E ^{2 \iu \theta } \bar{u}_j^2/2}
= r \E ^{r \E ^{\iu \theta } \bar{u}_j } \E ^{ - \E ^{2 \iu \theta } \bar{u}_j^2/2},
$$
that is, the holomorphic representation of {$\vert 0 ; r\ran _j$ is $C \E ^{r \bar{u}_j } \E ^{ - \bar{u}_j^2/2}$} and $\vert \theta ; r\ran _j$ is $C \E ^{r \E ^{\iu \theta } \bar{u}_j } \E ^{ - \E ^{2 \iu \theta } \bar{u}_j^2/2}$ for some constant $C$.

For $\vert \pi/2 ; r\ran _j$ = $\E ^{- r^2 /4} \E ^{\iu r \bar{u}_j } \E ^{\bar{u}_j^2/2}$, we have
	
\begin{align*}
 & \int _{-\infty }^{\infty } \vert \pi /2 ; r\ran _j {}_j \lan \pi /2 ; r \vert  dr
  = \int _{-\infty }^{\infty }  \E ^{- r^2 /4} \E ^{\iu r \bar{u}_j } \E ^{\bar{u}_j^2/2} \overline{\E ^{- r^2 /4} \E ^{\iu r \bar{v}_j } \E ^{\bar{v}_j^2/2}} dr
\\
 & = \int _{-\infty }^{\infty } \E ^{- r^2 /2} \E ^{\iu r(\bar{u}_j - v_j)} \E ^{\bar{u}_j^2/2} \E ^{v_j^2/2} dr
  = \sqrt{2 \pi } \E ^{- (\bar{u}_j - v_j)^2/2} \E ^{\bar{u}_j^2/2} \E ^{v_j^2/2} = \sqrt{2 \pi } \E ^{\bar{u}_j v_j} ,
\end{align*}
and
\begin{align*}
  &{}_j \lan \pi /2 ; r  \vert \pi /2 ; s\ran _j = \int _{\C }  \overline{\E ^{- r^2 /4} \E ^{\iu r \bar{u}_j } \E ^{\bar{u}_j^2/2}} \E ^{- s^2 /4} \E ^{\iu s \bar{u}_j } \E ^{\bar{u}_j^2/2} \E ^{-\bar{u}_ju_j} \frac{d\bar{u}_j du_j}{2\pi \iu }
\\
&= \E ^{- r^2 /4} \E ^{- s^2 /4} \int _{\C }  \E ^{-\iu r u_j } \E ^{u_j^2/2} \E ^{\iu s \bar{u}_j } \E ^{\bar{u}_j^2/2} \E ^{-\bar{u}_ju_j} \frac{d\bar{u}_j du_j}{2\pi \iu }
\\
&= \E ^{- (r^2 + s^2)/4}  \int _{\R ^2}  \E ^{-\iu (r - s) x_j } \E ^{(r + s) y_j } \E ^{-2 y_j^2} \frac{d x_j d y_j}{\pi }
\\
&= \E ^{- (r - s)^2 /8} \frac{1}{\sqrt{2 \pi }} \int _{-\infty }^{\infty }  \E ^{-\iu (r - s) x_j } d x_j 
= \E ^{- (r - s)^2 /8} \sqrt{2 \pi } \delta (r - s) = \sqrt{2 \pi } \delta (r - s).
\end{align*}
By the change of variables $\bar{u}_j \rightarrow \E^{\iu (\theta - \pi /2)} \bar{u}_j$, $\bar{v}_j \rightarrow \E^{\iu (\theta - \pi /2)} \bar{v}_j$, we get
\begin{align*}
 & \int _{-\infty }^{\infty } \vert \theta ; r\ran _j {}_j \lan \theta ; r \vert  dr =
\\
 &  \int _{-\infty }^{\infty }  \E ^{- r^2 /4} \E ^{\iu r \E^{\iu (\theta - \pi /2)} \bar{u}_j } \E ^{\E^{2\iu (\theta - \pi /2)} \bar{u}_j^2/2} \overline{\E ^{- r^2 /4} \E ^{\iu r \E^{\iu (\theta - \pi /2)} \bar{v}_j } \E ^{\E^{2\iu (\theta - \pi /2)}\bar{v}_j^2/2}} dr
\\
 & = \sqrt{2 \pi } \E ^{\E^{\iu (\theta - \pi /2)}\bar{u}_j \E^{-\iu (\theta - \pi /2)}v_j} = \sqrt{2 \pi } \E ^{\bar{u}_j v_j} = \sqrt{2 \pi }I,
\end{align*}
and
\begin{align*}
  & {}_j \lan \theta ; r  \vert \theta ; s\ran _j
    = \\
    & \int _{\C }  \overline{\E ^{- r^2 /4} \E ^{\iu r \E^{\iu (\theta - \pi /2)} \bar{u}_j } \E ^{\E^{2\iu (\theta - \pi /2)} \bar{u}_j^2/2}} \E ^{- s^2 /4} \E ^{\iu s \E^{\iu (\theta - \pi /2)} \bar{u}_j } \E ^{\E^{2\iu (\theta - \pi /2)} \bar{u}_j^2/2} \E ^{-\bar{u}_ju_j} \frac{d\bar{u}_j du_j}{2\pi \iu }
\\ 
{} & = \sqrt{2 \pi } \delta (r - s),
\end{align*}
where used the fact that $\E ^{-\bar{u}_ju_j}d\bar{u}_j du_j$ is invariant under the above change of variables.
This completes the proof.
\end{proof}
The above proposition shows the spectral resolution of $\xi _j(\theta )$:
$$
  \xi _j(\theta ) = \frac{1}{\sqrt{2 \pi }} \int _{-\infty }^{\infty } r \E ^{- r^2 /4} \E ^{r \E ^{\iu \theta } \bar{u}_j } \E ^{ - \E ^{2 \iu \theta } \bar{u}_j^2/2}
  \overline{\E ^{- r^2 /4} \E ^{r \E ^{\iu \theta } \bar{v}_j } \E ^{ - \E ^{2 \iu \theta } \bar{v}_j^2/2}} dr .
$$
\begin{rem} \label{ho:hbs}
	Since $x_0 = (c_0 + c_0^*)/\sqrt{2} = \xi _0(0)/\sqrt{2}$ and $p_1 = (c_1 - c_1^*)/\sqrt{2} \iu = \xi _1(\pi /2)/\sqrt{2}$, we have $\vert 0 ; \sqrt{2} x_- \ran _0 = \vert x_- \ran _0$ and $\vert \pi /2 ; \sqrt{2} p_+ \ran _1 = \vert p_+ \ran _1$ (see (\ref{in:01hbs})).
\end{rem}
Let
$$
  \R \ni \theta \mapsto \exp{\theta \begin{pmatrix} 0 & 1 \\ -1 & 0 \end{pmatrix}}
  = \begin{pmatrix} \cos \theta & \sin \theta \\ -\sin \theta  & \cos \theta \end{pmatrix} \in U(2)
$$
be a one-parameter subgroup in $U(2)$, the group of $2 \times 2$ unitary matrices,  and 
its action on $\h _0 \hat{\otimes } \h _1$ is
\beq \label{in:utheta}
  U(\theta ): \h _0 \hat{\otimes } \h _1 \ni f(\bar{u}_0, \bar{u}_1) \mapsto f(\bar{u}_0 \cos \theta + \bar{u}_1 \sin \theta , -\bar{u}_0 \sin \theta + \bar{u}_1 \cos \theta ) \in \h _0 \hat{\otimes } \h _1 .
\eeq
Its infinitesimal generator is
$$
  \frac{d}{d \theta } U(\theta ) f _{\vert \theta = 0} = \left( - \bar{u}_0 \frac{\partial }{\partial \bar{u}_1}
  + \bar{u}_1 \frac{\partial }{\partial \bar{u}_0}\right) f = (- c_0^* c_1 + c_1^* c_0)f ,
$$
and therefore $U(\theta )$ = $\exp \theta (- c_0^* c_1 + c_1^* c_0)$ and $U(\pi /4)$ = $\exp (\pi /4) (- c_0^* c_1 + c_1^* c_0)$ = $\E ^{\iu H_{hbs}^{01}}$ (see (\ref{in:01hbs})).
In the same reasoning we get
(see (\ref{gbbasis}))
$$
  \h _0 \ni f(\bar{u}_0) \mapsto  \E ^{-\abs{\alpha }^2/2} \E ^{\alpha \bar{u}_0} f(\bar{u}_0 - \bar{\alpha }) = \E ^{\alpha c_0^* - \bar{\alpha }c_0} f(\bar{u}_0) = D_0(\alpha )f(\bar{u}_0) ,
$$
and
$$
  \sum _{n=0}^{\infty } (D_0(\alpha ) \vert n \ran _0) \otimes \vert n \ran _1 = \sum _{n=0}^{\infty } \E ^{-\abs{\alpha }^2/2} \E ^{\alpha \bar{u}_0} \frac{(\bar{u}_0 - \bar{\alpha })^n \bar{u}_1^n}{n!} = \E ^{-\abs{\alpha }^2/2} \E ^{\alpha \bar{u}_0} \E^{\bar{u}_0 \bar{u}_1} \E^{ - \bar{\alpha } \bar{u}_1}.
$$
Applying $U(\pi /4) = \E ^{\iu H_{hbs}^{01}}$ gives
\begin{align*}
 & \E ^{\iu H_{hbs}^{01}} \sum _{n=0}^{\infty } (D_0(\alpha ) \vert n \ran _0) \otimes \vert n \ran _1 
 = \E ^{-\abs{\alpha }^2/2} \E ^{\alpha (\bar{u}_0 + \bar{u}_1)/\sqrt{2}} \E^{(\bar{u}_0 + \bar{u}_1) (-\bar{u}_0 + \bar{u}_1)/2} \E^{ - \bar{\alpha } (-\bar{u}_0 + \bar{u}_1)/\sqrt{2}}
\\
 & = \E ^{-\abs{\alpha }^2/2} \E ^{(\alpha + \bar{\alpha }) \bar{u}_0 /\sqrt{2}} \E^{(-\bar{u}_0^2 + \bar{u}_1^2)/2} \E^{(\alpha - \bar{\alpha }) \bar{u}_1 /\sqrt{2}}
  =  \pi ^{1/2} \vert x_- \ran _0 \otimes \vert p_+ \ran _1 , 
\end{align*}
where $\alpha = x_- + \iu p_+$.  This proves (\ref{in:hbsxp}).
\section{Balanced homodyne measurement} \label{balance}

Let
\beq \label{di:opXi}
\Xi  = c_2^*c_1 + c_1^*c_2 = \bar{u}_2 \frac{\partial }{\partial \bar{u}_1} + \bar{u}_1 \frac{\partial }{\partial \bar{u}_2} 
\eeq
be a {symmetric} operator on $\h = \h _1 \hat{\otimes } \h _2$ generated by $\bar{u} = (\bar{u}_1, \bar{u}_2)$
where the space $\h _j$ is generated by $\bar{u}_j$ ($j = 1, 2$). 
Then the function
$$
(\bar{u}_1 + \bar{u}_2)^m(- \bar{u}_1 + \bar{u}_2)^n
$$
is an eigen function of $\Xi $ with eigen value $m - n$.  In addition, the set of functions
$$
\vert m, n \ran  = \frac{1}{\sqrt{2^m 2^n m! n!}} (\bar{u}_1 + \bar{u}_2)^m(- \bar{u}_1 + \bar{u}_2)^n, \ \ 0 \leq m, n \in {\mathbb Z} 
$$
constitutes an orthonormal basis of the Fock space $\h $ of $c_1$ and $c_2$, which follows from  Remark \ref{di:remcons} since the above functions are expressed as
$$
\frac{1}{\sqrt{m! n!}} \bar{v}_1^m \bar{v}_2^n, \ \ 0 \leq m, n \in {\mathbb Z} 
$$ 
by the change of variables 
\beq \label{di:chvu}
\bar{v}_1 = (\bar{u}_1 + \bar{u}_2)/\sqrt{2}, \ \ \bar{v}_2 = (- \bar{u}_1 + \bar{u}_2)/\sqrt{2}
\eeq
and the measure $\mu _n$ ($n = 2$) is invariant under the change of variables (\ref{di:chvu}).

The eigen space of the operator $\Xi $ associated with the eigen-value $l \in {\mathbb Z}$ is the space spanned by the vectors
$$
\vert l + j, j \ran , \ 0 \leq j \in {\mathbb Z}
$$
for $l \geq 0$ and
$$
\vert j, j - l \ran , \ 0 \leq j \in {\mathbb Z}
$$
for $l \leq 0$.

The orthogonal projection $\Pi ^l$ onto the eigen-space with eigen-value $l$ is defined by
$$
\Pi ^l = \sum _{j = 0}^{\infty } \vert l + j, j \ran \lan l + j, j \vert 
$$
for $l \geq 0$, and
$$
\Pi ^l = \sum _{j = 0}^{\infty } \vert j, j - l \ran \lan j, j - l \vert 
$$
for $l \leq 0$.

The operator $\Xi  = c_2^*c_1 + c_1^*c_2$ has the spectral resolution
$$
\Xi  = \sum _{l = - \infty }^{\infty } l \Pi ^l,
$$
and {therefore $\Xi $ is a self-adjoint operator.  The} operator $X_{\alpha } = \Xi /\abs{\alpha }$ has the spectral resolution
$$
X_{\alpha } = \Xi /\abs{\alpha } = \sum _{l = - \infty }^{\infty } (l/\abs{\alpha }) \Pi ^l .
$$

Now we have expressed the projections $\Pi ^l$ and the generalized vectors $\vert \theta ; r\ran _1$ in the holomorphic representation.  But in order to prove 
(see (\ref{in:strongcon}))
$$
\abs{\alpha } \Pi ^l (\vert \phi \ran _1 \otimes \vert \alpha \ran _2) \rightarrow  \frac{1}{\sqrt{2 \pi }} \vert \theta ; x \ran _1 
{}_1 \lan \theta ; x \vert \phi \ran _1 \otimes \vert \alpha \ran _2 
$$
for a coherent state $\vert \alpha \ran _2 \in \h _2$ ($\alpha \in \C$) and $\vert \phi \ran _1 \in \h _1$,
we need some preparation.  We begin with the useful expression of the projection operator $\Pi ^l$.

\begin{prop} \label{di:propcosl}
Introduce
\begin{flalign} \label{di:phi1}
	&\vert \varphi , l \ran  = \sum _{j=0}^{\infty } \E ^{\iu j \varphi } \vert l+j, j \ran 
	= f(\varphi , l, \bar{u}_1, \bar{u}_2)\nonumber\\
&	= \sum _{j=0}^{\infty } \E ^{\iu j \varphi } \frac{1}{\sqrt{2^{l+j} 2^j (l+j)! j!}} (\bar{u}_1 + \bar{u}_2)^{l+j}(- \bar{u}_1 + \bar{u}_2)^j \\
&	= \frac{1}{\sqrt{2^l}} (\bar{u}_1 + \bar{u}_2)^l \sum _{j=0}^{\infty } \E ^{\iu j \varphi } \frac{\sqrt{j!}}{\sqrt{(l+j)!}} \frac{1}{2^j j!} (- \bar{u}_1^2 + \bar{u}_2^2)^j
\nonumber	
\end{flalign}
	for $l \geq 0$ and
\beq \label{di:phi2}
\begin{aligned}
	\vert \varphi , l \ran & = \sum _{j=0}^{\infty } \E ^{\iu j \varphi } \vert j, j-l \ran 
	= f(\varphi , l, \bar{u}_1, \bar{u}_2) \\ 
{}&	= \frac{1}{\sqrt{2^{\abs{l}}}} (-\bar{u}_1 + \bar{u}_2)^{\abs{l}} \sum _{j=0}^{\infty } \E ^{\iu j \varphi } \frac{\sqrt{j!}}{\sqrt{(\abs{l}+j)!}} \frac{1}{2^j j!} (- \bar{u}_1^2 + \bar{u}_2^2)^j 
\end{aligned}
\eeq
	for $l \leq 0 $.  Then the series 
	 (\ref{di:phi1}) and (\ref{di:phi2}) 
	 converge to entire functions {of $\bar{u}_1$ and $\bar{u}_1$}, and we have
	$$
	\frac{1}{2 \pi } \int _{a - \pi}^{a + \pi } \vert \varphi , l \ran \lan \varphi , l \vert d \varphi 
	= \Pi ^l 
	$$
	for any $a \in \R$.
\end{prop}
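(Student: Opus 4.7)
First I will verify that the series defining $\vert\varphi,l\rangle$ genuinely defines entire functions of $(\bar u_1,\bar u_2)$, and then establish the spectral-resolution identity for $\Pi^l$ by expanding everything in the orthonormal basis $\{\vert m,n\rangle\}$ and using Fourier orthogonality on $[a-\pi,a+\pi]$.

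For the convergence part, I would apply the unitary change of variables $\bar v_1=(\bar u_1+\bar u_2)/\sqrt{2}$, $\bar v_2=(-\bar u_1+\bar u_2)/\sqrt{2}$ (which preserves the measure $\mu_2$, as noted above). In the new variables the series (\ref{di:phi1}) becomes
$$
\vert\varphi,l\rangle = \sum_{j=0}^{\infty} e^{ij\varphi}\,\frac{\bar v_1^{\,l+j}\bar v_2^{\,j}}{\sqrt{(l+j)!\,j!}},
$$
and the analogous expression (using $\bar v_2$ in place of $\bar v_1$) handles the case $l\leq0$. For any $R>0$ and $|\bar v_1|,|\bar v_2|\le R$, the ratio of consecutive terms is bounded by $R^2/\sqrt{(l+j+1)(j+1)}\to0$, so the series converges absolutely and uniformly on compacta, yielding an entire function of $(\bar v_1,\bar v_2)$ and hence of $(\bar u_1,\bar u_2)$.

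For the integral formula, recall that by construction $\vert\varphi,l\rangle=\sum_{j\ge0}e^{ij\varphi}\vert l+j,j\rangle$ (for $l\ge0$; similar for $l\le0$), where the $\vert l+j,j\rangle$ form an orthonormal subset of $\h$. Since $\sum_j|e^{ij\varphi}|^2=\infty$, the generalized vector $\vert\varphi,l\rangle$ does not lie in $\h$, so I must interpret the claimed identity via matrix elements. For any $\vert\psi_1\rangle=\sum_{m,n}a_{mn}\vert m,n\rangle$ and $\vert\psi_2\rangle=\sum_{m,n}b_{mn}\vert m,n\rangle$ in $\h$, Parseval's identity applied to the orthonormal basis gives $\sum_j|a_{l+j,j}|^2<\infty$ and $\sum_j|b_{l+j,j}|^2<\infty$. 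Consequently the functions
$$
F_1(\varphi)=\langle\psi_1\vert\varphi,l\rangle=\sum_{j\ge0}e^{ij\varphi}\,\overline{a_{l+j,j}},\qquad F_2(\varphi)=\langle\varphi,l\vert\psi_2\rangle=\sum_{k\ge0}e^{-ik\varphi}\,b_{l+k,k}
$$
lie in $L^2([a-\pi,a+\pi])$, and Parseval's theorem for the Fourier system $\{e^{ij\varphi}\}_{j\in\Z}$ on an interval of length $2\pi$ yields
$$
\frac{1}{2\pi}\int_{a-\pi}^{a+\pi}F_1(\varphi)F_2(\varphi)\,d\varphi=\sum_{j\ge0}\overline{a_{l+j,j}}\,b_{l+j,j}=\langle\psi_1\vert\Pi^l\vert\psi_2\rangle,
$$
which is the desired operator identity. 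The case $l\le0$ is handled identically with $\vert j,j-l\rangle$ replacing $\vert l+j,j\rangle$.

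The main obstacle is not an estimate but the \emph{interpretation}: because $\vert\varphi,l\rangle\notin\h$, one cannot naively exchange $\sum_{j,k}$ with $\int d\varphi$ in $\vert\varphi,l\rangle\langle\varphi,l\vert=\sum_{j,k}e^{i(j-k)\varphi}\vert l+j,j\rangle\langle l+k,k\vert$. Testing against arbitrary $\vert\psi_1\rangle,\vert\psi_2\rangle\in\h$ as above and invoking Parseval is the cleanest way to bypass this; the integer-spacing of the Fourier frequencies $j-k$ is exactly what makes the interval $[a-\pi,a+\pi]$ (of length $2\pi$) give $\delta_{jk}$ independently of $a$.
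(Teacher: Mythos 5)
Your proposal is correct and takes essentially the same approach as the paper: a uniform-convergence-on-compacta bound for the entirety claim (the paper dominates the partial sums by $\frac{1}{\sqrt{2^l}}\abs{\bar{u}_1+\bar{u}_2}^l\E^{\abs{-\bar{u}_1^2+\bar{u}_2^2}/2}$ where you use the ratio test in the $\bar{v}$ variables), followed by the orthogonality $\int_{a-\pi}^{a+\pi}\E^{\iu(i-j)\varphi}\,d\varphi=2\pi\delta_{i,j}$ to reduce $\int\vert\varphi,l\ran\lan\varphi,l\vert\,d\varphi$ to $2\pi\sum_{j}\vert l+j,j\ran\lan l+j,j\vert=2\pi\Pi^l$. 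The only presentational difference is that the paper integrates the kernel term by term (justified by the uniform convergence), while you test against vectors of $\h$ and invoke Parseval, which makes the same interchange of sum and integral slightly more explicit.
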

\begin{proof}
	We present the proof explicitly
	for the case of $l \geq 0$, 
	for $l \leq 0$, the proof is
	the same. The inequality
	$$
	\sum _{j=0}^n \abs{ \E ^{\iu j \varphi } \vert l+j, j \ran } \leq 
	\frac{1}{\sqrt{2^l}} \abs{ \bar{u}_1 + \bar{u}_2}^l \sum _{j=0}^n \frac{1}{2^j j!} \abs{ - \bar{u}_1^2 + \bar{u}_2^2}^j
	\leq \frac{1}{\sqrt{2^l}} \abs{ \bar{u}_1 + \bar{u}_2}^l \E ^{\abs{ - \bar{u}_1^2 + \bar{u}_2^2}/2}
	$$	
	implies the uniform convergence of (\ref{di:phi1}) with respect to $\varphi $ {in $\R $ and $\bar{u}_1, \bar{u}_2$ in any compact subset of $\C $}.  Integration term by term implies
	\begin{eqnarray*}
	\int _{a - \pi }^{a + \pi } \vert \varphi , l \ran \lan \varphi , l \vert d \varphi 
&=& \int _{a - \pi }^{a + \pi } \left[ \sum _{i=0}^{\infty } \E ^{\iu i \varphi } \vert l+i, i \ran \right]
	\left[ \sum _{j = 0}^{\infty } \E ^{-\iu j \varphi } \lan l+j, j \vert \right] d \varphi
	\\
	&=& 2 \pi \sum _{j = 0}^{\infty } \vert l + j, j \ran \lan l + j, j \vert = 2\pi \Pi ^l ,
	\end{eqnarray*} \\
	where we used the relations
	$$
	\int _{a - \pi }^{a + \pi } \E ^{\iu i \varphi } \E ^{-\iu j \varphi } d \varphi = 2\pi \delta _{i, j}
	$$
	for $i, j = 0, 1, 2, \cdots $.
\end{proof}

\begin{prop} 
$$
	\vert \varphi , 0 \ran = f(\varphi , 0, \bar{u}_1, \bar{u}_2) = \E ^{\iu \varphi (- \bar{u}_1^2 + \bar{u}_2^2)/2} .
$$
\end{prop}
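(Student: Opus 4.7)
The plan is to just evaluate the defining series for $|\varphi,0\rangle$ directly and recognize the resulting power series as an exponential. No deep ideas are needed; this is essentially a bookkeeping computation built on the formula for $|m,n\rangle$ that was introduced a few paragraphs earlier.

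First I would specialize the definition in \eqref{di:phi1} of the previous proposition to $l=0$. That gives
\[
\vert \varphi, 0 \rangle \;=\; \sum_{j=0}^{\infty} \E^{\iu j \varphi}\, \vert j, j\rangle .
\]
Next I would substitute the explicit formula $\vert m,n\rangle = \frac{1}{\sqrt{2^m 2^n m! n!}} (\bar u_1 + \bar u_2)^m(-\bar u_1 + \bar u_2)^n$ for $m=n=j$, which collapses the two square-root prefactors to $\frac{1}{2^j j!}$ and multiplies the two conjugate binomials into the single factor $(-\bar u_1^2 + \bar u_2^2)^j$. So the series becomes
\[
\vert \varphi, 0\rangle \;=\; \sum_{j=0}^{\infty} \frac{1}{j!} \left[\frac{\E^{\iu\varphi}(-\bar u_1^2 + \bar u_2^2)}{2}\right]^{j},
\]
which is literally the Taylor series of $\exp\!\bigl(\E^{\iu\varphi}(-\bar u_1^2 + \bar u_2^2)/2\bigr)$, giving the claimed identity.

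The only subtlety is justifying that the series really evaluates to this entire function of $\bar u_1, \bar u_2$ rather than just being a formal equality. But Proposition~\ref{di:propcosl} has already established uniform convergence of the series defining $|\varphi, l\rangle$ on compact subsets of $\C^2$, so the interchange of summation with the identification as an exponential series is immediate. There is no real obstacle; the main thing to be careful about is the sign bookkeeping in the product $(\bar u_1 + \bar u_2)(-\bar u_1 + \bar u_2) = -\bar u_1^2 + \bar u_2^2$, which is what produces the sign pattern in the exponent of the final closed form.
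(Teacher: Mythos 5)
Your proof is correct and essentially identical to the paper's: specialize the definition to $l=0$, combine the conjugate binomials into $(-\bar{u}_1^2+\bar{u}_2^2)^j$ so the prefactor collapses to $1/(2^j j!)$, and recognize the resulting series as an exponential. Note that the closed form you obtain, $\E^{\E^{\iu\varphi}(-\bar{u}_1^2+\bar{u}_2^2)/2}$, agrees with what the paper's own proof derives and with how the quantity is used later (e.g.\ in (\ref{di:alcos0})); the exponent printed in the proposition statement itself, $\E^{\iu\varphi(-\bar{u}_1^2+\bar{u}_2^2)/2}$, appears to be a typo.
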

\begin{proof}
\begin{eqnarray*}
	\vert \varphi , 0 \ran &=& \sum _{j=0}^{\infty } \E ^{\iu j \varphi } \frac{1}{\sqrt{2^j 2^j j! j!}} (\bar{u}_1 + \bar{u}_2)^j (- \bar{u}_1 + \bar{u}_2)^j  \\
	&=& \sum _{j=0}^{\infty } \E^{\iu j \varphi } \frac{1}{2^j j!} (- \bar{u}_1^2 + \bar{u}_2^2)^j  
	= \E ^{\E ^{\iu \varphi }(- \bar{u}_1^2 + \bar{u}_2^2)/2} .
\end{eqnarray*}
\end{proof}
It follows from Proposition \ref{di:substitute} that
$$
{}_2\lan \alpha \vert \varphi , 0 \ran  = \E^{-\vert \alpha \vert ^2/2} \int \E^{\bar{\alpha }u_2}  f(\varphi , 0, \bar{u}_1, \bar{u}_2) 
\E^{- \bar{u}_2u_2} \frac{d\bar{u}_2 du_2}{2 \pi \iu } 
= \E^{-\vert \alpha \vert ^2/2} f(\varphi , 0, \bar{u}_1, \bar{\alpha }) .
$$

With $\alpha = \E^{\iu \theta } \abs{\alpha }$ we get
\beq \label{di:alcos0}
{}_2\lan \alpha \vert \varphi , 0 \ran
= \E^{- \vert \alpha \vert ^2/2} f(\varphi , 0, \bar{u}_1, \E^{-\iu \theta } \abs{\alpha }) 
= \E^{- \vert \alpha \vert ^2/2} \E ^{\E ^{\iu \varphi }(- \bar{u}_1^2 + \E^{-2\iu \theta } \abs{\alpha }^2)/2} .
\eeq

If $\E ^{\iu \varphi } = \E^{2\iu \theta }$ then
$$
\E^{- \vert \alpha \vert ^2/2} \E ^{\E ^{\iu \varphi }(- \bar{u}_1^2 + \E^{-2\iu \theta } \abs{\alpha }^2)/2}
= \E ^{- \E^{2\iu \theta } \bar{u}_1^2/2} .
$$ 
It is interesting that $\E ^{- \E^{2\iu \theta } \bar{u}_1^2/2}$ is a generalized-eigen function of $\xi _1(\theta ) = \E^{-\iu \theta } c_1 + \E^{\iu \theta }c_1^*$ with eigen-value $0$.
For other $\varphi $,
$$
{}_2\lan \alpha \vert \varphi , 0 \ran 
= \E ^{-\E ^{\iu \varphi }\bar{u}_1^2/2} \E^{- \vert \alpha \vert ^2/2} 
\E ^{\E ^{\iu \varphi }\E^{-2\iu \theta } \abs{\alpha }^2/2} \rightarrow 0
$$
as $\abs{\alpha } \rightarrow \infty $.
This suggests the convergence of 
$$
  \abs{\alpha } {}_2 \lan \alpha \vert \Pi ^0 \vert \alpha \ran _2 = \frac{\abs{\alpha }}{2 \pi } \int _{2\theta - \pi}^{2\theta + \pi } {}_2\lan \alpha \vert \varphi , 0 \ran \lan \varphi , 0 \vert \alpha \ran _2 d \varphi 
$$
to $\displaystyle \frac{1}{\sqrt{2\pi }} \E ^{- \E^{2\iu \theta } \bar{u}_1^2/2} \overline{\E ^{- \E^{2\iu \theta } \bar{v}_1^2/2}}$ as $\abs{\alpha } \rightarrow \infty $ (see (\ref{in:charg})).  

\section{Proof of the convergence of $\abs{\alpha } {}_2 \lan \alpha \vert \Pi ^{[x \abs{\alpha }]} \vert \alpha \ran _2$} \label{direct}
Our main aim is to prove the formula (\ref{ih:formulamain}) in Section \ref{intro}:  For $\alpha = \E ^{\iu \theta } \abs{\alpha }$
\beq \label{di:formulamain}
\begin{aligned}
&{}&\abs{\alpha } {}_2 \lan \alpha \vert \Pi ^{[x \abs{\alpha }]} \vert \alpha \ran _2
 \longrightarrow  \frac{1}{\sqrt{2 \pi }} \vert \theta ; x \ran _1 {}_1 \lan \theta ; x \vert \\
& = & \frac{1}{\sqrt{2 \pi }} \E ^{x \E ^{\iu \theta } \bar{u}_1} \E ^{ - \E ^{2 \iu \theta } \bar{u}_1^2/2} \E ^{-x^2/4} 
\overline{\E ^{x \E ^{\iu \theta } \bar{v}_1}\E ^{ - \E ^{2 \iu \theta } \bar{v}_1^2/2} \E ^{-x^2/4}} 
\end{aligned}
\eeq
as $\abs{\alpha } \rightarrow \infty $.  Note that the above formula immediately gives the convergence of the probability distribution (see (\ref{in:braun}))
$$
\abs{\alpha } ({}_1 \lan \beta \vert \otimes {}_2 \lan \alpha \vert ) \Pi ^{[x \abs{\alpha }]} (\vert \beta \ran _1 \otimes \vert \alpha \ran _2) \rightarrow  \frac{1}{\sqrt{2 \pi }} {}_1 \lan \beta \vert \theta ; x \ran _1 
{}_1 \lan \theta ; x \vert \beta \ran _1 = \frac{1}{\sqrt{2 \pi }} \E ^{-\{ x - (\bar{\alpha } \beta  + \alpha \bar{\beta })/\abs{\alpha }\} ^2/2} .
$$
Define 
projections $\Pi ^{(a, b]\abs{\alpha }}$, $P_R^{(a, b]}$ and $P^{(a, b]}$ by
$$
\Pi ^{(a, b]\abs{\alpha }} = \sum _{a \abs{\alpha } < l \leq b \abs{\alpha }} \Pi ^l , \ \ 
P_R^{(a,b]} = \frac{1}{\sqrt{2 \pi }} \sum _{a \abs{\alpha } < l \leq b \abs{\alpha }} \vert \theta ; l/\abs{\alpha } \ran _1
{}_1 \lan \theta ; l/\abs{\alpha } \vert  \frac{1}{\abs{\alpha }}
$$
and
$$  
P^{(a, b]} = \frac{1}{\sqrt{2 \pi }} \int _{a}^{b} \vert \theta ; r \ran _1
{}_1 \lan \theta ; r \vert  dr
$$
respectively.

Before we prove 
(\ref{di:formulamain}), we show that 
formula (\ref{di:formulamain}) implies the following proposition:
\begin{prop} \label{di:pitop}
	Let $\vert \beta \ran _1$ be a coherent state, and $\vert \psi \ran = \vert \beta \ran _1 \otimes \vert \alpha \ran _2$.
	Then	
	$$
	\norm{ \Pi ^{(a, b]\abs{\alpha }} \vert \psi \ran - (P^{(a, b]} \otimes I) \vert \psi \ran } ^2 \rightarrow 0
	$$
	as $\abs{\alpha } \rightarrow \infty $.
\end{prop}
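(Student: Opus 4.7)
Since both $\Pi^{(a,b]\abs{\alpha}}$ and $P^{(a,b]}\otimes I$ are orthogonal projections (for the latter, (\ref{ho:cons}) forces $(P^{(a,b]})^2 = P^{(a,b]}$), the squared norm admits the standard expansion
\begin{align*}
\norm{\Pi^{(a,b]\abs{\alpha}}\vert\psi\ran - (P^{(a,b]}\otimes I)\vert\psi\ran}^2
&= \lan\psi\vert\Pi^{(a,b]\abs{\alpha}}\vert\psi\ran + \lan\psi\vert(P^{(a,b]}\otimes I)\vert\psi\ran \\
&\quad - 2\,\mathrm{Re}\,\lan\psi\vert(P^{(a,b]}\otimes I)\Pi^{(a,b]\abs{\alpha}}\vert\psi\ran.
\end{align*}
The plan is to show that each of the three terms on the right tends, as $\abs{\alpha}\to\infty$, to the same limit ${}_1\lan\beta\vert P^{(a,b]}\vert\beta\ran_1$, so that the whole expression goes to zero.

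The middle term is already this value \emph{exactly}, since ${}_2\lan\alpha\vert\alpha\ran_2 = 1$. For the first term, write
$$\lan\psi\vert\Pi^{(a,b]\abs{\alpha}}\vert\psi\ran = \sum_{a\abs{\alpha}<l\leq b\abs{\alpha}} {}_1\lan\beta\vert\bigl[{}_2\lan\alpha\vert\Pi^l\vert\alpha\ran_2\bigr]\vert\beta\ran_1,$$
and apply the main formula (\ref{di:formulamain}) with $x = l/\abs{\alpha}$: each summand multiplied by $\abs{\alpha}$ tends to $(2\pi)^{-1/2}\abs{{}_1\lan\beta\vert\theta;l/\abs{\alpha}\ran_1}^{2}$, so the whole sum is a Riemann sum on the mesh $\{l/\abs{\alpha}\}_l$ of $(a,b]$ whose limit is $\int_a^b(2\pi)^{-1/2}\abs{{}_1\lan\beta\vert\theta;r\ran_1}^{2}\,dr = {}_1\lan\beta\vert P^{(a,b]}\vert\beta\ran_1$. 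The cross term is analogous: because $P^{(a,b]}$ acts only on $\h_1$ it commutes with ${}_2\lan\alpha\vert\cdot\vert\alpha\ran_2$, and (\ref{ho:cons}) yields $P^{(a,b]}\vert\theta;r\ran_1 = \vert\theta;r\ran_1$ for $r\in(a,b)$; hence the $l$-th summand has the same $\abs{\alpha}^{-1}$ asymptotic and the same Riemann integral is recovered.

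The hard part is promoting the pointwise convergence provided by (\ref{di:formulamain}) to convergence of these Riemann sums. Concretely, what is needed is a discrete dominated-convergence argument: an $\abs{\alpha}$-independent integrable majorant on $(a,b]$ for $\abs{\alpha}\cdot{}_1\lan\beta\vert\bigl[{}_2\lan\alpha\vert\Pi^l\vert\alpha\ran_2\bigr]\vert\beta\ran_1$ viewed as a function of $r = l/\abs{\alpha}$, together with enough equicontinuity in $r$ to pass from the Riemann sum to the integral. This should be feasible because (\ref{in:braun}) identifies these expectations, to leading order in $\abs{\alpha}$, with Gaussians in $r$ of bounded width centred near $(\bar\alpha\beta + \alpha\bar\beta)/\abs{\alpha}$; in particular, the Chebyshev-type Poisson estimates that already underlie the proof of (\ref{di:formulamain}) itself should supply both the majorant and the uniformity. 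For the cross term the only additional point is that the endpoint values $r = a,b$ form a set of measure zero and are in any case suppressed by the Gaussian decay, so they do not affect the limit.
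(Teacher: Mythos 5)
Your proof is correct and follows essentially the same route as the paper: expand the squared norm of the difference into diagonal and cross terms, show each tends to ${}_1\lan\beta\vert P^{(a,b]}\vert\beta\ran_1$ via the main formula (\ref{di:formulamain}), a Riemann-sum passage from $P_R^{(a,b]}$ to $P^{(a,b]}$, and the reproducing identity $\vert\theta;r\ran_1\,{}_1\lan\theta;r\vert P^{(a,b]}=\vert\theta;r\ran_1\,{}_1\lan\theta;r\vert$ from (\ref{ho:cons}). The uniformity and domination issue you flag as ``the hard part'' is passed over silently in the paper's own proof as well, so if anything you have been more explicit about the one step that neither argument fully justifies.
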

\begin{proof}
In order to prove the proposition, we will show that each term of the right hand side of the following formula converges to ${}_1 \lan \beta \vert P^{(a, b]} \vert \beta \ran _1$ as $\abs{\alpha } \rightarrow \infty $.
	\begin{flalign*}
{}&	\norm{ \Pi ^{(a, b]\abs{\alpha }} \vert \psi \ran - P^{(a, b]}  (\vert \beta \ran _1 \otimes \vert \alpha \ran _2) } ^2
\\
{}&= \lan \psi \vert \Pi ^{(a, b]\abs{\alpha }} \vert \psi \ran 
	+  {}_1 \lan \beta \vert P^{(a, b]} \vert \beta \ran _1  {}_2 \lan \alpha \vert \alpha \ran _2 
	\\
{}&	- \lan \psi \vert \Pi ^{(a, b]\abs{\alpha }} (P^{(a, b]} \vert \beta \ran _1 \otimes \vert \alpha \ran _2)
	- ({}_1 \lan \beta \vert P^{(a, b]}
	\otimes {}_2 \lan \alpha \vert )\Pi ^{(a, b]\abs{\alpha }} \vert \psi \ran .
	\end{flalign*}
	Since the formula (\ref{di:formulamain}) implies
$$
   {}_1 \lan \beta \vert \abs{\alpha } {}_2 \lan \alpha \vert \Pi ^{[x \abs{\alpha }]} \vert \alpha \ran _2 \vert \beta \ran _1
   \rightarrow  \frac{1}{\sqrt{2 \pi }} {}_1 \lan \beta \vert \theta ; x \ran _1 
   {}_1 \lan \theta ; x \vert \beta \ran _1
$$
as $\abs{\alpha } \rightarrow \infty $, we have
	\begin{flalign*}
{}&	\lan \psi \vert \Pi ^{(a, b]\abs{\alpha }} \vert \psi \ran  = \sum _{a \abs{\alpha } < l \leq b \abs{\alpha }} {}_1 \lan \beta \vert {}_2 \lan \alpha \vert \Pi ^l \vert \alpha \ran _2 \vert \beta \ran _1 \nonumber \\
{}&	 \rightarrow \frac{1}{\sqrt{2 \pi }} \sum _{a \abs{\alpha } < l \leq b \abs{\alpha }} {}_1 \lan \beta \vert \theta ; l/\abs{\alpha } \ran _1
	{}_1 \lan \theta ; l/\abs{\alpha } \vert \beta \ran _1 \frac{1}{\abs{\alpha }} = {}_1 \lan \beta \vert P_R^{(a, b]} \vert \beta \ran _1 \rightarrow {}_1 \lan \beta \vert P^{(a, b]} \vert \beta \ran _1
	\nonumber
	\end{flalign*}
	as $\abs{\alpha } \rightarrow \infty $.  Next observe
$$
	\lan \psi \vert \Pi ^{(a, b]\abs{\alpha }} (P^{(a, b]} \vert \beta \ran _1 \otimes \vert \alpha \ran _2)
	= {}_1 \lan \beta \vert ({}_2 \lan \alpha \vert \Pi ^{(a, b]\abs{\alpha }} \vert \alpha \ran _2 ) P^{(a, b]} \vert \beta \ran _1 
$$
$$
	\rightarrow {}_1 \lan \beta \vert P_R^{(a, b]} P^{(a, b]} \vert \beta \ran _1 = {}_1 \lan \beta \vert P_R^{(a, b]} \vert \beta \ran _1 \rightarrow {}_1 \lan \beta \vert P^{(a, b]} \vert \beta \ran _1
$$
	where we used the relation (see (\ref{ho:cons}))
\begin{flalign*}
{}&	\vert \theta ; l/\abs{\alpha } \ran _1  {}_1 \lan \theta ; l/\abs{\alpha } \vert  P^{(a, b]}
	= \frac{1}{\sqrt{2 \pi }} \int _{a}^{b} \vert \theta ; l/\abs{\alpha } \ran _1  {}_1 \lan \theta ; l/\abs{\alpha } \vert \theta ; r \ran _1
	{}_1 \lan \theta ; r \vert  dr
	\\
{}&	= \int _{a}^{b} \vert \theta ; l/\abs{\alpha } \ran _1  {}_1 \delta (l/\abs{\alpha } - r)
	{}_1 \lan \theta ; r \vert  dr = \vert \theta ; l/\abs{\alpha } \ran _1  {}_1 \lan \theta ; l/\abs{\alpha } \vert .
\end{flalign*}
	Similarly we find
\begin{flalign*}
{}&	({}_1 \lan \beta \vert P^{(a, b]}
	\otimes {}_2 \lan \alpha \vert )\Pi ^{(a, b]\abs{\alpha }} \vert \psi \ran
	= {}_1 \lan \beta \vert P^{(a, b]} ({}_2 \lan \alpha \vert \Pi ^{(a, b]\abs{\alpha }} \vert \alpha \ran _2 ) \vert \beta \ran _1 
	\\
{}&	\rightarrow {}_1 \lan \beta \vert P^{(a, b]} P_R^{(a, b]} \vert \beta \ran _1 
	= {}_1 \lan \beta \vert P_R^{(a, b]} \vert \beta \ran _1 \rightarrow {}_1 \lan \beta \vert P^{(a, b]} \vert \beta \ran _1 .
\end{flalign*}
	This proves the proposition.
\end{proof}
\begin{cor} 
	Let $\vert \phi \ran _1 \in \h _1$, and $\vert \psi \ran = \vert \phi \ran _1 \otimes \vert \alpha \ran _2$.
	Then	
	$$
	\norm{ \Pi ^{(a, b]\abs{\alpha }} \vert \psi \ran - (P^{(a, b]} \otimes I) \vert \psi \ran } ^2 \rightarrow 0
	$$
	as $\abs{\alpha } \rightarrow \infty $.
\end{cor}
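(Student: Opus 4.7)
The plan is to upgrade Proposition \ref{di:pitop} from coherent states to arbitrary $\vert \phi \ran _1 \in \h _1$ by a density argument based on a uniform operator bound. Let $T_{\alpha } := \Pi ^{(a, b]\abs{\alpha }} - P^{(a, b]} \otimes I$ be the operator on $\h _1 \hat{\otimes } \h _2$ whose norm on the vector $\vert \phi \ran _1 \otimes \vert \alpha \ran _2$ we want to control. My first step is to observe that $T_{\alpha }$ is uniformly bounded independently of $\abs{\alpha }$: $\Pi ^{(a, b]\abs{\alpha }}$ is the orthogonal projection onto a sum of mutually orthogonal eigenspaces of the self-adjoint $\Xi $, hence has norm at most $1$; and $P^{(a, b]}$ is the spectral projection of the self-adjoint operator $\xi _1(\theta )$ onto the interval $(a, b]$, via (\ref{ho:eigenvec})--(\ref{ho:cons}), so it too has norm at most $1$. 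Consequently $\norm{T_{\alpha }} \leq 2$.

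Since $\vert \alpha \ran _2$ is a unit vector, the linear map $S_{\alpha } : \h _1 \to \h _1 \hat{\otimes } \h _2$ defined by $S_{\alpha } \vert \phi \ran _1 = T_{\alpha }(\vert \phi \ran _1 \otimes \vert \alpha \ran _2)$ also satisfies $\norm{S_{\alpha }} \leq 2$ uniformly in $\abs{\alpha }$. By Proposition \ref{di:pitop}, $\norm{S_{\alpha } \vert \beta \ran _1} \to 0$ as $\abs{\alpha } \to \infty $ for every coherent state $\vert \beta \ran _1 \in \h _1$, and hence by linearity the same holds for every finite linear combination $\vert \phi' \ran _1 = \sum _{k=1}^N c_k \vert \beta _k \ran _1$.

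Next I would invoke the density of finite linear combinations of coherent states in $\h _1$. This follows either from the resolution of identity (\ref{di:idop1}), or more directly from the holomorphic representation $\vert \beta \ran _1 = \E ^{-\abs{\beta }^2/2} \sum _{n=0}^{\infty } \beta ^n/\sqrt{n!} \, \vert n \ran _1$: derivatives in $\beta $ at $\beta = 0$ (realised as limits of finite differences of coherent states) produce every Fock basis vector $\vert n \ran _1$, so the linear span of coherent states is dense. Given $\vert \phi \ran _1 \in \h _1$ and $\ve > 0$, choose such a combination $\vert \phi ' \ran _1$ with $\norm{\vert \phi \ran _1 - \vert \phi ' \ran _1} < \ve /4$, and pick $\abs{\alpha }$ large enough that $\norm{S_{\alpha } \vert \phi ' \ran _1} < \ve /2$. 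The triangle inequality together with the uniform bound $\norm{S_{\alpha }} \leq 2$ gives
\[
\norm{S_{\alpha } \vert \phi \ran _1} \leq \norm{S_{\alpha } (\vert \phi \ran _1 - \vert \phi ' \ran _1)} + \norm{S_{\alpha } \vert \phi ' \ran _1} < 2 \cdot \ve /4 + \ve /2 = \ve ,
\]
which proves the corollary.

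The main obstacle is really just the uniform bound $\norm{P^{(a,b]}} \leq 1$: although $P^{(a,b]}$ is defined via generalized eigen-vectors whose inner products are delta functions rather than Kronecker deltas, one needs to identify it honestly as the spectral projection of the self-adjoint $\xi _1(\theta )$ on $(a,b]$, which is the content of the spectral resolution stated just before Remark \ref{ho:hbs}. Once this is in hand, the proof is a standard $\ve /3$-style argument combining pointwise convergence on a dense set with a uniform operator bound.
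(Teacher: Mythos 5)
Your argument is correct and is essentially the paper's own proof: the authors likewise note that $\Pi ^{(a, b]\abs{\alpha }}$ and $P^{(a, b]}$ are projections (hence uniformly bounded) and that linear combinations of coherent states are dense in $\h _1$ by (\ref{di:idop1}), then conclude from Proposition \ref{di:pitop}. You have merely spelled out the standard density-plus-uniform-bound argument that the paper leaves implicit.
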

\begin{proof}
	Since $\Pi ^{(a, b]\abs{\alpha }}$ and $P^{(a, b]}$ are projections (and consequently, bounded operators) and linear combinations of coherent states constitute a dense subset of $\h _1$ (see (\ref{di:idop1})), the corollary follows from the previous proposition.
\end{proof}
\begin{rem}
We can modify the postulates of projective measurement of Definition \ref{in:projmesur}  as follows (see \cite{BP02}):
	{\ } \newline \indent
Postulate 1) When one measures the observable $R = \int _{-\infty }^{\infty} r dE_r$ on  
the state $\vert \psi \ran $, the probability for obtaining the outcome $r$ in the interval $(a, b]$ is
$\lan \psi \vert E(a, b] \vert \psi \ran /\lan \psi \vert \psi \ran $,
where $E(a, b] = \int _a^b dE_r$.

Postulate 2) After the measurement the initial state $\vert \psi \ran $ changes to $E(a, b] \vert \psi \ran $.	
\end{rem}

Note that for the observable $\Xi  = \sum _{l = - \infty }^{\infty } l \Pi ^l$, $E_r = \Pi ^{(-\infty , r]}$ and $E(a, b] = \Pi ^{(a, b]}$, for the observable 
$\xi (\theta ) = \frac{1}{\sqrt{2 \pi }} \int _{-\infty }^{\infty } r \vert \theta ; r \ran _1
{}_1 \lan \theta ; r \vert  dr$, $E_r = P^{(-\infty , r]}$ and $E(a, b] = P^{(a, b]}$.

\begin{rem} \label{di:bhomodynecoll}
Let $a = x - \epsilon $ and $b = x + \epsilon $ for a small $\epsilon > 0$, and $\vert \phi \ran _1 \in \h _1$. Then the relation
$$
\Pi ^{(a, b]\abs{\alpha }} \vert \phi \ran _1 \otimes \vert \alpha \ran _2 \approx (P^{(a, b]}\vert \phi \ran _1) \otimes \vert \alpha \ran _2 
$$
shows the following:  If the balanced homodyne detector detects the difference $N_2 - N_1$ of the photon numbers $N_1$ and
$N_2$ on
the state $e^{\iu H^{12}_{\rm hbs}} \vert \phi \ran _1 \otimes \vert \alpha \ran _2$, composed of the signal $\vert \phi \ran _1$, the strong laser beam $\vert \alpha \ran _2$ and the operation of beam-splitter $e^{\iu H^{12}_{\rm hbs}}$, and the result is $l = x \abs{\alpha }$ within an error $\epsilon \abs{\alpha }$, then the state $e^{\iu H^{12}_{\rm hbs}} \vert \phi \ran _1 \otimes \vert \alpha \ran _2$ changes to $e^{\iu H^{12}_{\rm hbs}} P^{(a, b]}
 \vert \phi \ran _1 \otimes \vert \alpha \ran _2$. 
 
If ${}_1 \lan \theta ; x \vert \phi \ran _1$ is meaningful (e.g., $\vert \phi \ran _1$ is a linear combination of coherent states), $(P^{(a, b]}\vert \phi \ran _1) \otimes \vert \alpha \ran _2 \approx 2\epsilon \vert \theta ; x \ran _1
{}_1 \lan \theta ; x \vert \phi \ran _1 \otimes \vert \alpha \ran _2$
and the state $e^{\iu H^{12}_{\rm hbs}} \vert \phi \ran _1 \otimes \vert \alpha \ran _2$ changes to $2\epsilon e^{\iu H^{12}_{\rm hbs}} \vert \theta ; x \ran _1
{}_1 \lan \theta ; x \vert \phi \ran _1 \otimes \vert \alpha \ran _2$,
which is equivalent to $e^{\iu H^{12}_{\rm hbs}} \vert \theta ; x \ran _1
{}_1 \lan \theta ; x \vert \phi \ran _1 \otimes \vert \alpha \ran _2$ by the normalization (of generalized state).

For $\epsilon = 1/2 \abs{\alpha }$ the relations $\Pi ^{(a, b]\abs{\alpha }}\vert \phi \ran _1 \otimes \vert \alpha \ran _2 = \Pi ^{[x\abs{\alpha } + 1/2]}\vert \phi \ran _1 \otimes \vert \alpha \ran _2$ and $(P^{(a, b]}\vert \phi \ran _1) \otimes \vert \alpha \ran _2 \approx (1/\abs{\alpha }) \vert \theta ; x \ran _1
	{}_1 \lan \theta ; x \vert \phi \ran _1 \otimes \vert \alpha \ran _2$ show (\ref{in:strongcon}).
\end{rem}
First, we attend to 
the case of $x = 0$, i.e., $\lim _{\abs{\alpha } \rightarrow \infty }  \abs{\alpha } {}_2 \lan \alpha \vert \Pi ^0 \vert \alpha \ran _2$.  We begin with the following proposition.

\begin{prop} \label{di:propdirac}
	For a continuous function ${g}(\varphi )$ we have	
	\beq \label{di:diracd}
	\frac{\abs{\alpha }}{\sqrt{2 \pi }} \int _{- \pi }^{\pi } g(\varphi ) \E ^{(\cos \varphi - 1) \abs{\alpha }^2} d \varphi 
	\rightarrow g(0)
	\eeq
	as $\abs{\alpha } \rightarrow \infty $.	
\end{prop}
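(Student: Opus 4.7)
The plan is to recognize that $K_{\abs{\alpha}}(\varphi) := \frac{\abs{\alpha}}{\sqrt{2\pi}} \E^{(\cos\varphi - 1)\abs{\alpha}^2}$ is an approximate identity on $[-\pi,\pi]$ concentrating at $\varphi = 0$ as $\abs{\alpha} \to \infty$. Since $\cos\varphi - 1 \leq 0$ with equality only at $\varphi = 0$ on $[-\pi,\pi]$, the exponential suppresses contributions away from the origin, while the Taylor expansion $\cos\varphi - 1 = -\varphi^2/2 + O(\varphi^4)$ near $0$ produces, under the scaling $\varphi = u/\abs{\alpha}$, a standard Gaussian whose normalization $\frac{1}{\sqrt{2\pi}}\int \E^{-u^2/2}du = 1$ yields exactly the factor we need.

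First, I would split the integral at a small but fixed $\delta \in (0, 1]$:
$$\frac{\abs{\alpha}}{\sqrt{2\pi}} \int_{-\pi}^{\pi} = \frac{\abs{\alpha}}{\sqrt{2\pi}} \int_{-\delta}^{\delta} + \frac{\abs{\alpha}}{\sqrt{2\pi}} \int_{\delta \leq \abs{\varphi} \leq \pi}.$$
For the outer piece, the monotonicity of $\cos$ on $[0,\pi]$ gives $\cos\varphi - 1 \leq \cos\delta - 1 < 0$, so
$$\left|\frac{\abs{\alpha}}{\sqrt{2\pi}} \int_{\delta \leq \abs{\varphi} \leq \pi} g(\varphi) \E^{(\cos\varphi - 1)\abs{\alpha}^2} d\varphi\right| \leq \frac{\abs{\alpha}}{\sqrt{2\pi}} \cdot 2\pi \cdot \|g\|_\infty \cdot \E^{(\cos\delta - 1)\abs{\alpha}^2},$$
using that $g$ is continuous on the compact set $[-\pi,\pi]$ hence bounded. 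Since $(\cos\delta - 1) < 0$, the right side tends to $0$ as $\abs{\alpha} \to \infty$.

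For the central piece I would substitute $\varphi = u/\abs{\alpha}$ to rewrite it as $\frac{1}{\sqrt{2\pi}} \int_{-\delta\abs{\alpha}}^{\delta\abs{\alpha}} g(u/\abs{\alpha}) \E^{(\cos(u/\abs{\alpha}) - 1)\abs{\alpha}^2} du$ and apply dominated convergence. For the dominating function, the elementary inequality $1 - \cos\varphi \geq \varphi^2/4$, valid on $[-1,1]$ (which I can arrange by choosing $\delta \leq 1$), gives $(\cos(u/\abs{\alpha}) - 1)\abs{\alpha}^2 \leq -u^2/4$ whenever $\abs{u}/\abs{\alpha} \leq \delta$. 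Thus the integrand is uniformly dominated on $\R$ (after extending by $0$) by the integrable $\|g\|_\infty \E^{-u^2/4}$. Pointwise, the Taylor expansion gives $(\cos(u/\abs{\alpha}) - 1)\abs{\alpha}^2 \to -u^2/2$, and continuity of $g$ at $0$ gives $g(u/\abs{\alpha}) \to g(0)$, so the limit equals $\frac{g(0)}{\sqrt{2\pi}} \int_{-\infty}^{\infty} \E^{-u^2/2} du = g(0)$, completing the argument.

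The main technical point is simply securing a clean quadratic upper bound on $\cos\varphi - 1$ near $0$ so that dominated convergence applies with a Gaussian majorant; no analytical subtlety beyond this is required, and the tail estimate is straightforward by the strict negativity of $\cos\delta - 1$.
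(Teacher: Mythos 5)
Your proof is correct. It follows the same overall strategy as the paper's --- concentration of the kernel at $\varphi =0$, with a split into a central piece and an exponentially negligible tail (the tail estimates are essentially identical) --- but the treatment of the central piece is genuinely different. The paper takes a cutoff that shrinks with the parameter, $\delta = \abs{\alpha }^{-3/4}$, and sandwiches $\E ^{(\cos \varphi - 1)\abs{\alpha }^2}$ between $\E ^{-\varphi ^2 \abs{\alpha }^2/2}$ and $\E ^{\delta ^4 \abs{\alpha }^2/4!}\, \E ^{-\varphi ^2 \abs{\alpha }^2/2}$ via a two-sided Taylor estimate, the exponent $-3/4$ being chosen precisely so that the correction factor $\E ^{\abs{\alpha }^{-1}/4!}$ tends to $1$ while $\delta \abs{\alpha } = \abs{\alpha }^{1/4} \rightarrow \infty $; this keeps everything at the level of explicit elementary inequalities and a squeeze, in line with the paper's stated aim of using only elementary calculus, and the continuity of $g$ enters only through the (implicit) observation that $g \approx g(0)$ on the shrinking window. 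You instead keep $\delta $ fixed, rescale $\varphi = u/\abs{\alpha }$, and invoke the dominated convergence theorem with the Gaussian majorant $\norm{g}_{\infty } \E ^{-u^2/4}$ furnished by the inequality $1 - \cos \varphi \geq \varphi ^2/4$ on $[-1, 1]$ (which is valid, e.g.\ by the alternating series bound $1 - \cos \varphi \geq \varphi ^2/2 - \varphi ^4/24 \geq 11 \varphi ^2/24$ there). Your route is shorter and handles the factor $g$ more transparently --- the paper's displayed computation really only treats $g \equiv 1$ on the central window --- at the cost of appealing to dominated convergence rather than a bare squeeze; both your majorant and your pointwise limit $(\cos (u/\abs{\alpha }) - 1)\abs{\alpha }^2 \rightarrow -u^2/2$, $g(u/\abs{\alpha }) \rightarrow g(0)$ are verified correctly, so the argument is complete.
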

\begin{proof}
Taylor's theorem tells us that the relation
$$
	\cos \varphi - 1 = - \frac{1}{2} \varphi ^2 + \frac{\cos \theta \varphi }{4!} \varphi ^4 
$$
holds for some $0 < \theta < 1$, and consequently, for $-\pi /2 \leq - \delta \leq \varphi \leq \delta \leq \pi /2$,
$$
	- \frac{1}{2} \varphi ^2 \leq - \frac{1}{2} \varphi ^2 + \frac{\cos \varphi }{4!} \varphi ^4 \leq \cos \varphi - 1 \leq - \frac{1}{2} \varphi ^2 + \frac{1}{4!} \varphi ^4 \leq - \frac{1}{2} \varphi ^2 + \frac{1}{4!} \delta ^4 
$$
and 
$$
	\E ^{-\varphi ^2 \abs{\alpha }^2/2} \leq \E ^{(\cos \varphi - 1)\abs{\alpha }^2} \leq \E ^{\delta ^4 \abs{\alpha }^2/4!} \E ^{-\varphi ^2 \abs{\alpha }^2/2}.
$$
Therefore we have	
\beq \label{di:intineq}
	\int _{- \delta }^{\delta } \E ^{-\varphi ^2 \abs{\alpha }^2/2} d \varphi \leq \int _{- \delta }^{\delta } \E ^{(\cos \varphi - 1)\abs{\alpha }^2} d \varphi \leq \E ^{\delta ^4 \abs{\alpha }^2/4!} \int _{- \delta }^{\delta }\E ^{-\varphi ^2 \abs{\alpha }^2/2} d \varphi .
\eeq
Let $\delta = \abs{\alpha }^{-3/4}$.  Then, for $\alpha \rightarrow \infty$,
\beq \label{di:lim1}
   \E ^{\delta ^4 \abs{\alpha }^2/4!} = \E ^{\abs{\alpha }^{-3} \abs{\alpha }^2/4!} = \E ^{\abs{\alpha }^{-1} /4!} \rightarrow 1 
\eeq
and
\beq \label{di:gauss}
	\begin{aligned}
&{}&	\frac{\abs{\alpha }}{\sqrt{2 \pi }} \int _{- \delta }^{\delta } \E ^{-\varphi ^2 \abs{\alpha }^2/2} d \varphi 
	= \frac{1}{\sqrt{2 \pi }} \int _{- \delta \abs{\alpha }}^{\delta \abs{\alpha }} \E ^{-x^2/2} d x\\	
	&=& \frac{1}{\sqrt{2 \pi }} \int _{- \abs{\alpha }^{1/4}}^{\abs{\alpha }^{1/4}} \E ^{-x^2/2} d x
	\rightarrow \frac{1}{\sqrt{2 \pi }} \int _{- \infty }^{\infty } \E ^{-x^2/2} d x = 1 	
	\end{aligned}
\eeq
	hold as $\abs{\alpha } \rightarrow \infty $.
	Combining the relations (\ref{di:intineq}) - (\ref{di:gauss}) we get 
	$$
	\frac{\abs{\alpha }}{\sqrt{2 \pi }} \int _{- \delta }^{\delta } 
	\E ^{(\cos \varphi - 1)\abs{\alpha }^2} d \varphi 
	\rightarrow 1
	$$
	as $\abs{\alpha } \rightarrow \infty $.  This and the following relations show the desired relation (\ref{di:diracd}).
$$
\begin{aligned}
 0& \leq \int _{\delta }^{\pi } \E ^{(\cos \varphi - 1)\abs{\alpha }^2} d \varphi +
	\int _{-\pi }^{-\delta } \E ^{(\cos \varphi - 1)\abs{\alpha }^2} d \varphi \\ 
	&\leq  2\E ^{\delta ^4 \abs{\alpha }^2/4!} \int _{\delta }^{\pi }\E ^{-\varphi ^2 \abs{\alpha }^2/2} d \varphi\\ 
	&\leq  2\E ^{\delta ^4 \abs{\alpha }^2/4!} \E ^{-\delta ^2 \abs{\alpha }^2/2} (\pi - \delta ) 	\rightarrow 0  
\end{aligned}
$$
	for $0 < \delta \leq \pi $ and $\abs{\alpha } \rightarrow \infty $.
\end{proof}
\begin{cor} \label{di:cordirac}
	For a continuous function $f(\varphi )$ we have	
$$
	\frac{\abs{\alpha }}{\sqrt{2 \pi }} \int _{a - \pi }^{a + \pi } f(\varphi ) \E ^{(\cos (\varphi - a) - 1)\abs{\alpha }^2} d \varphi 
	\rightarrow f(a)
$$
as $\abs{\alpha } \rightarrow \infty $.	
\end{cor}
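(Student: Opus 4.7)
The plan is to reduce Corollary \ref{di:cordirac} to Proposition \ref{di:propdirac} by the translation $\psi = \varphi - a$. Under this change of variable, $d\psi = d\varphi$, the limits $\varphi \in [a-\pi, a+\pi]$ become $\psi \in [-\pi, \pi]$, the exponent $\cos(\varphi - a) - 1$ becomes $\cos\psi - 1$, and the weight $f(\varphi)$ becomes $g(\psi) := f(\psi + a)$. Since translation preserves continuity, $g$ is continuous on $[-\pi, \pi]$, so $g$ satisfies the hypothesis of the proposition.

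Applying Proposition \ref{di:propdirac} to $g$ then yields
\begin{align*}
\frac{\abs{\alpha}}{\sqrt{2\pi}} \int_{a-\pi}^{a+\pi} f(\varphi)\,\E^{(\cos(\varphi - a) - 1)\abs{\alpha}^2} d\varphi
&= \frac{\abs{\alpha}}{\sqrt{2\pi}} \int_{-\pi}^{\pi} g(\psi)\, \E^{(\cos\psi - 1)\abs{\alpha}^2} d\psi \\
&\longrightarrow g(0) = f(a)
\end{align*}
as $\abs{\alpha} \to \infty$, which is the desired statement.

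There is no real obstacle here: the analytic work of localizing the mass of $\E^{(\cos\psi - 1)\abs{\alpha}^2}$ near $\psi = 0$---using the Taylor estimate $\cos\psi - 1 \approx -\psi^2/2$, sandwiching by Gaussians, and choosing the width $\delta = \abs{\alpha}^{-3/4}$ so that $\E^{\delta^4 \abs{\alpha}^2/4!} \to 1$ while $\delta\abs{\alpha} \to \infty$---has already been carried out in the proof of Proposition \ref{di:propdirac}. The corollary merely re-centers that approximate identity at an arbitrary phase $a$, and continuity of $f$ at $a$ (equivalently, continuity of $g$ at $0$) ensures the expected pointwise limit.
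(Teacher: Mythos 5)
Your proof is correct and coincides with the paper's own argument, which likewise reduces the corollary to Proposition \ref{di:propdirac} by the substitution $\varphi - a \rightarrow \varphi$. You have simply spelled out the details the paper leaves implicit.
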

\begin{proof}
	By changing the variable $\varphi - a \rightarrow \varphi $, we have the corollary.
\end{proof}

Now we can prove the formula (\ref{di:formulamain}) for $x = 0$.

\begin{thm} \label{di:main1}
	Let $\Pi ^0$ be the projection operator onto the eigen-space of the observable $\Xi $ of (\ref{di:opXi}) with eigen-value $0$ and $\alpha = \E ^{\iu \theta } \abs{\alpha }$.  Then
	$$
	\lim _{\abs{\alpha } \rightarrow \infty }  \abs{\alpha } {}_2 \lan \alpha \vert \Pi ^0 \vert \alpha \ran _2
	= \frac{1}{\sqrt{2 \pi }} \E ^{ - \E ^{2 \iu \theta } \bar{u}_1^2/2}
	\overline{\E ^{ - \E ^{2 \iu \theta } \bar{v}_1^2/2}} ,
	$$
	where $\E ^{ - \E ^{2 \iu \theta } \bar{u}_1^2/2}$ is the eigen-function of the observable $\xi _1(\theta )$ of (\ref{di:opxitheta}) with eigen value $0$.
\end{thm}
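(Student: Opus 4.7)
The plan is to reduce the claim directly to Corollary \ref{di:cordirac} by combining the resolution of $\Pi^0$ from Proposition \ref{di:propcosl} with the explicit formula (\ref{di:alcos0}) for ${}_2\langle\alpha\vert\varphi,0\rangle$.

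First, I would choose $a = 2\theta$ in Proposition \ref{di:propcosl}, yielding
$$
\abs{\alpha}\,{}_2\lan\alpha\vert\Pi^0\vert\alpha\ran_2
= \frac{\abs{\alpha}}{2\pi}\int_{2\theta-\pi}^{2\theta+\pi}
{}_2\lan\alpha\vert\varphi,0\ran\,{}_2\lan\varphi,0\vert\alpha\ran_2\,d\varphi .
$$
Then I would substitute (\ref{di:alcos0}) and its complex conjugate (with $\bar u_1$ replaced by $v_1$) to obtain the integrand as the product of a kernel depending only on $\varphi, \bar u_1, v_1$ and a $\varphi$-dependent exponential in $\abs{\alpha}^2$.

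The explicit computation to carry out is
\begin{align*}
{}_2\lan\alpha\vert\varphi,0\ran\,{}_2\lan\varphi,0\vert\alpha\ran_2
&= \E^{-\E^{\iu\varphi}\bar u_1^{\,2}/2}\,\E^{-\E^{-\iu\varphi}v_1^{\,2}/2}\\
&\quad\times \E^{-\abs{\alpha}^2}\,\E^{\E^{\iu(\varphi-2\theta)}\abs{\alpha}^2/2}\,\E^{\E^{-\iu(\varphi-2\theta)}\abs{\alpha}^2/2}\\
&= \E^{-\E^{\iu\varphi}\bar u_1^{\,2}/2}\,\E^{-\E^{-\iu\varphi}v_1^{\,2}/2}\,\E^{(\cos(\varphi-2\theta)-1)\abs{\alpha}^2},
\end{align*}
the key simplification being that the two $\abs{\alpha}^2$ exponentials combine via $(\E^{\iu t}+\E^{-\iu t})/2 = \cos t$ into the Gaussian-like kernel that Corollary \ref{di:cordirac} treats.

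Setting $f(\varphi) = \E^{-\E^{\iu\varphi}\bar u_1^{\,2}/2}\,\E^{-\E^{-\iu\varphi}v_1^{\,2}/2}$, which is continuous in $\varphi$ for fixed $\bar u_1, v_1$, I would apply Corollary \ref{di:cordirac} with $a = 2\theta$: rewriting $\abs{\alpha}/(2\pi) = (1/\sqrt{2\pi})\cdot\abs{\alpha}/\sqrt{2\pi}$, the corollary delivers
$$
\abs{\alpha}\,{}_2\lan\alpha\vert\Pi^0\vert\alpha\ran_2
\;\longrightarrow\;
\frac{1}{\sqrt{2\pi}}\,f(2\theta)
= \frac{1}{\sqrt{2\pi}}\,\E^{-\E^{2\iu\theta}\bar u_1^{\,2}/2}\,\overline{\E^{-\E^{2\iu\theta}\bar v_1^{\,2}/2}},
$$
which is the claimed limit.

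There is no real obstacle here: the argument is purely bookkeeping. The only subtle point is to carry out the conjugation of (\ref{di:alcos0}) correctly so that the two $\abs{\alpha}^2$ phases are genuinely complex conjugates and assemble into $\cos(\varphi-2\theta)\,\abs{\alpha}^2$; once that is done, Corollary \ref{di:cordirac} does all the analytic work, and the $\varphi$-independent factor produced by evaluating $f$ at $2\theta$ is precisely the eigenfunction kernel $\E^{-\E^{2\iu\theta}\bar u_1^{\,2}/2}\,\overline{\E^{-\E^{2\iu\theta}\bar v_1^{\,2}/2}}$ of $\xi_1(\theta)$ at eigenvalue $0$.
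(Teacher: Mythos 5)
Your proposal is correct and is essentially identical to the paper's own proof: both decompose $\Pi^0$ via Proposition \ref{di:propcosl}, compute the product ${}_2\lan\alpha\vert\varphi,0\ran\lan\varphi,0\vert\alpha\ran_2$ from (\ref{di:alcos0}) so that the two $\abs{\alpha}^2$ phases combine into $\cos(\varphi-2\theta)\abs{\alpha}^2$, and then apply Corollary \ref{di:cordirac} with $a=2\theta$ and the same choice of $f(\varphi)$.
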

\begin{proof}
	According to Proposition \ref{di:propcosl}, we calculate
	$$
	\lim _{\abs{\alpha } \rightarrow \infty } \frac{\abs{\alpha }}{2 \pi } \int _{a - \pi}^{a + \pi } 
	{}_2\lan \alpha \vert \varphi , 0 \ran \lan \varphi , 0 \vert \alpha \ran _2 d \varphi .
	$$	
	Let $\alpha = \E^{\iu \theta } \abs{\alpha }$. Then from the formula (\ref{di:alcos0}) we have
	\begin{align*}
	{}_2\lan \alpha \vert \varphi , 0 \ran \lan \varphi , 0 \vert \alpha \ran _2
&= \E^{- \vert \alpha \vert ^2/2} \E ^{\E ^{\iu \varphi }(- \bar{u}_1^2 + \E^{-2\iu \theta } \abs{\alpha }^2)/2} 
	\E^{- \vert \alpha \vert ^2/2} \E ^{\E ^{-\iu \varphi }(- v_1^2 + \E^{2\iu \theta } \abs{\alpha }^2)/2} 
	\\
	&= \E^{- \vert \alpha \vert ^2} \E ^{-(\E ^{\iu \varphi } \bar{u}_1^2/2 + \E ^{-\iu \varphi } v_1^2/2) + \cos (\varphi - 2\theta )\abs{\alpha }^2)} 
	\end{align*}
	and we get $\frac{1}{\sqrt{2 \pi }} \E ^{ - \E ^{2 \iu \theta } \bar{u}_1^2/2} \overline{\E ^{ - \E ^{2 \iu \theta } \bar{v}_1^2/2}}$	by using Corollary \ref{di:cordirac} {for $a = 2\theta $ and $f(\varphi ) = \E ^{-(\E ^{\iu \varphi } \bar{u}_1^2/2 + \E ^{-\iu \varphi } v_1^2/2)}$}.  This completes the proof.
\end{proof}

Next, we calculate $\lim _{\abs{\alpha } \rightarrow \infty }  \abs{\alpha } {}_2 \lan \alpha \vert \Pi ^l \vert \alpha \ran _2$ for $l = [x \abs{\alpha }]$, $x \neq 0$.
In the holomorphic representation, ${}_2\lan \alpha \vert \varphi , l \ran $ is expressed as
\beq \label{di:phipo}
  {}_2\lan \alpha \vert \varphi , l \ran = \E ^{- \abs{\alpha }^2/2} \left[ 
  \sum _{j=0}^{\infty } \E ^{\iu j \varphi } \frac{\sqrt{j!}}{\sqrt{2^l(l+j)!}} (\bar{u}_1 + \bar{\alpha })^l \frac{1}{2^j j!} (- \bar{u}_1^2 + \bar{\alpha }^2)^j \right] 
\eeq
for $l \geq 0$ and
$$
{}_2\lan \alpha \vert \varphi , l \ran = \E ^{- \abs{\alpha }^2/2} \left[ 
\sum _{j=0}^{\infty } \E ^{\iu j \varphi } \frac{\sqrt{j!}}{\sqrt{2^{\abs{l}}(\abs{l}+j)!}} (-\bar{u}_1 + \bar{\alpha })^{\abs{l}} \frac{1}{2^j j!} (- \bar{u}_1^2 + \bar{\alpha }^2)^j \right] 
$$
for $l \leq 0$.  We calculate ${}_2\lan \alpha \vert \varphi , l \ran $ only for the case of $l \geq 0$, because the calculation for $l \leq 0$ goes in the same way.
Note that (see (\ref{di:alcos0}))
\begin{align*}
{}_2\lan \alpha \vert \varphi , 0 \ran &= \E^{- \vert \alpha \vert ^2/2} \sum _{j=0}^{\infty } \E^{\iu j \varphi } \frac{1}{2^j j!} (- \bar{u}_1^2 + \bar{\alpha }^2)^j  
= \E^{- \vert \alpha \vert ^2/2} \E ^{\E ^{\iu \varphi }(- \bar{u}_1^2 + \E^{-2\iu \theta } \abs{\alpha }^2)/2} \\
&= \E ^{-\E ^{\iu \varphi } \bar{u}_1^2/2} \E^{- \vert \alpha \vert ^2/2}  
\sum_{j=0}^{\infty} \E ^{\iu j(\varphi - 2\theta )} \frac{(\abs{\alpha }^2/2)^j}{j!} .
\end{align*}
Comparing this with (\ref{di:phipo}), the terms $\frac{\sqrt{j!}}{\sqrt{2^l(l+j)!}} (\bar{u}_1 + \bar{\alpha })^l$ change ${}_2\lan \alpha \vert \varphi , 0 \ran $ into ${}_2\lan \alpha \vert \varphi , l \ran $, and we will study how these terms change $\E ^{- \E^{2\iu \theta } \bar{u}_1^2/2}$ into $\E ^{-x^2/4} \E ^{x \E ^{\iu \theta } \bar{u}_1} \E ^{ - \E ^{2 \iu \theta } \bar{u}_1^2/2}$.

{Let $l$ be an even positive integer.}  Note that the following change of the sum induces the interesting effect:
\beq \label{di:changsum}
  \E ^{- \abs{\alpha }^2/2} \sum _{j=0}^{\infty } \E^{\iu j \varphi } \frac{1}{2^j j!} (- \bar{u}_1^2 + \bar{\alpha }^2)^j
  \longrightarrow \E ^{- \abs{\alpha }^2/2} \sum _{j=l/2}^{\infty } \E^{\iu j \varphi } \frac{1}{2^j j!} (- \bar{u}_1^2 + \bar{\alpha }^2)^j
\eeq
\beq \label{in:changsum}
\begin{aligned} 
&= \E ^{- \abs{\alpha }^2/2} \sum _{j=0}^{\infty } \E^{\iu (j+l/2) \varphi } \frac{1}{2^{(j+l/2)} (j+l/2)!} (- \bar{u}_1^2 + \bar{\alpha }^2)^{(j+l/2)} 
\\
&= \E^{\iu l \varphi /2} \frac{1}{\sqrt{2^l}} (- \bar{u}_1^2 + \bar{\alpha }^2)^{l/2} \E ^{- \abs{\alpha }^2/2} \sum _{j=0}^{\infty } \E^{\iu j \varphi } \frac{j!}{(j+l/2)!} \frac{1}{2^j j!} (- \bar{u}_1^2 + \bar{\alpha }^2)^j .
\end{aligned}
\eeq
The multiplication {of each $j$-th term in (\ref{in:changsum})} with 
$$
\frac{(\bar{u}_1 + \bar{\alpha })^l}{(- \bar{u}_1^2 + \bar{\alpha }^2)^{l/2}}  \frac{(j + l/2)!}{\sqrt{(j+l)! j!}}
$$
induces the change
$$
\rightarrow \E^{\iu l \varphi /2} \frac{1}{\sqrt{2^l}} (\bar{u}_1 + \bar{\alpha })^l \E ^{- \abs{\alpha }^2/2} \sum _{j=0}^{\infty } \E^{\iu j \varphi } \frac{\sqrt{j!}}{\sqrt{(j+l)!}} \frac{1}{2^j j!} (- \bar{u}_1^2 + \bar{\alpha }^2)^j = \E^{\iu l \varphi /2} {}_2\lan \alpha \vert \varphi , l \ran .
$$
Equivalently,
\beq \label{di:psum}
  \frac{(\bar{u}_1 + \bar{\alpha })^l}{(- \bar{u}_1^2 + \bar{\alpha }^2)^{l/2}} \E^{- \vert \alpha \vert ^2/2} \sum _{j=l/2}^{\infty } \E^{\iu j \varphi } 
  \frac{j!}{\sqrt{(j+l/2)! (j - l/2)!}}
  \frac{1}{2^j j!} (- \bar{u}_1^2 + \bar{\alpha }^2)^j = \E^{\iu l \varphi /2} {}_2\lan \alpha \vert \varphi , l \ran .
\eeq
Now we study the formula (\ref{di:psum}).
\begin{lem} \label{di: cpsum}
	For any $\lambda , m > 0$ we have
$$
	\sum _{0 \leq j \leq m - \lambda \sqrt{m}} \E ^{-m} \frac{m^j}{j!} + \sum _{m + \lambda \sqrt{m} \leq j} \E ^{-m} \frac{m^j}{j!} \leq \frac{1}{\lambda ^2} .
$$
\end{lem}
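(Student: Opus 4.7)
The plan is to recognize $P(j;m) = \E^{-m} m^j/j!$ as the Poisson distribution with parameter $m$, so that the left-hand side of the lemma is exactly the tail probability
$$
\sum_{\abs{j-m} \geq \lambda \sqrt{m}} \E^{-m} \frac{m^j}{j!},
$$
and to bound this by Chebyshev's inequality applied to the Poisson law.

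First, I would compute the first two moments by shifting the index in $\sum_{j=0}^\infty m^j/j! = \E^m$. This gives the mean $\sum_j j\, P(j;m) = m$ and the second factorial moment $\sum_j j(j-1)\,P(j;m) = m^2$, so that
$$
\sum_{j=0}^{\infty} (j-m)^2 \E^{-m}\frac{m^j}{j!} = m.
$$

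Chebyshev's estimate then follows in one line: since $(j - m)^2 \geq \lambda^2 m$ for every $j$ contributing to the left-hand side of the lemma,
$$
m = \sum_{j=0}^{\infty} (j-m)^2 \E^{-m}\frac{m^j}{j!} \;\geq\; \lambda^2 m \!\!\sum_{\abs{j-m}\geq \lambda\sqrt{m}}\! \E^{-m}\frac{m^j}{j!},
$$
and dividing through by $\lambda^2 m$ yields the claimed bound $1/\lambda^2$.

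There is no real obstacle here: this is the textbook Chebyshev inequality specialized to the Poisson distribution, and both ingredients (the moment computation and the one-line variance-to-tail estimate) are elementary. The only bookkeeping point to verify is that in the first sum of the statement the indices satisfy $j \leq m$, while in the second they satisfy $j \geq m$, so together they exhaust exactly the set $\{j \geq 0 : \abs{j-m} \geq \lambda\sqrt{m}\}$ with no double counting.
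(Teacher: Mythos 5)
Your proposal is correct and is essentially identical to the paper's proof, which simply invokes Chebyshev's inequality $\Pr(\abs{X-\mu}\geq \lambda\sigma)\leq 1/\lambda^2$ for a Poisson random variable with mean $m$ and standard deviation $\sqrt{m}$. You merely spell out the moment computation and the variance-to-tail step that the paper leaves implicit.
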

\begin{proof}
The above inequality is nothing but the Chebyshev's inequality Pr$(\abs{X - \mu} \geq \lambda  \sigma ) \leq 1/\lambda ^2$ for the Poisson random variable $X$ with mean $\mu = m$ and standard deviation $\sigma = \sqrt{m}$.
\end{proof}

\begin{prop}
Let $M > 0$ and $0 < \theta < 1$.  Then
$$
	\E ^{-M} \sum _{0 \leq j \leq \theta M} \frac{M^j}{j!} \rightarrow 0
$$
as $M \rightarrow \infty $.
\end{prop}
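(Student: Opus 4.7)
The plan is to recognize that this statement is a direct corollary of the Chebyshev-type estimate for the Poisson distribution given in Lemma \ref{di: cpsum}, applied with a specific choice of the parameter $\lambda$ that grows with $M$.

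Specifically, I would set $m = M$ in Lemma \ref{di: cpsum} and choose $\lambda = (1-\theta)\sqrt{M}$, which is strictly positive since $0 < \theta < 1$. With this choice one has
\[
   m - \lambda \sqrt{m} = M - (1-\theta)\sqrt{M}\cdot \sqrt{M} = \theta M,
\]
so the range of summation $0 \leq j \leq \theta M$ in the statement coincides precisely with $0 \leq j \leq m - \lambda \sqrt{m}$ in Lemma \ref{di: cpsum}. Since the second (right-tail) sum in Lemma \ref{di: cpsum} is non-negative, the lemma immediately gives
\[
   \E^{-M} \sum_{0 \leq j \leq \theta M} \frac{M^j}{j!} \leq \frac{1}{\lambda ^2} = \frac{1}{(1-\theta)^2 M}.
\]

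Letting $M \to \infty$ with $\theta \in (0,1)$ fixed, the right-hand side tends to zero, which establishes the claim.

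This is essentially a routine specialization; the only thing that deserves a brief remark is that the Chebyshev bound is strong enough here because we are estimating a \emph{fixed-fraction} large-deviation event $\{X \leq \theta M\}$ for a Poisson variable $X$ of mean $M$, whose standard deviation $\sqrt{M}$ is much smaller than the deviation $(1-\theta)M$ from the mean; no sharper (Chernoff-type) bound is needed for the qualitative convergence, so there is no real obstacle to overcome.
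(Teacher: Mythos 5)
Your proof is correct and takes essentially the same route as the paper: both arguments reduce the statement to the Chebyshev bound for the Poisson distribution in Lemma \ref{di: cpsum}, applied to the left tail. The only difference is that you choose $\lambda = (1-\theta)\sqrt{M}$ so as to obtain the explicit rate $1/((1-\theta)^2 M)$, whereas the paper keeps $\lambda$ fixed and arbitrary and observes that $\theta m \leq m - \lambda\sqrt{m}$ for $m$ large; both variants are valid.
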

\begin{proof}
For any $\lambda  > 0$, there exists $M > 0$ such that $\theta m \leq m - \lambda \sqrt{m}$ for $m \geq M$, and it follows from Lemma \ref{di: cpsum} that
$$
  \E ^{-m} \sum _{0 \leq j \leq \theta m} \frac{m^j}{j!} \leq \sum _{0 \leq j \leq m - \lambda \sqrt{m}} \E ^{-m} \frac{m^j}{j!} \leq \frac{1}{\lambda ^2} .
$$
This completes the proof.
\end{proof}
\begin{prop} \label{di:l/2->0}
Let $l/2 = [x \abs{\alpha }/2]$ for $x > 0$.  Then
$$
  \E^{- \vert \alpha \vert ^2/2} \sum _{j=0}^{l/2} \E^{\iu j \varphi } \frac{1}{2^j j!} (- \bar{u}_1^2 + \bar{\alpha }^2)^j  \rightarrow 0
$$	
as $\abs{\alpha } \rightarrow \infty $.
\end{prop}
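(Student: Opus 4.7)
The plan is to dominate the given sum in absolute value by a truncated exponential series, recognize the truncated series as the left tail of a Poisson distribution, and then invoke the proposition that immediately precedes the statement to conclude.

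First I would apply the triangle inequality inside the sum. Since $|{-\bar{u}_1^2 + \bar{\alpha}^2}| \le |\bar{u}_1|^2 + |\alpha|^2$, setting $R = (|\bar{u}_1|^2 + |\alpha|^2)/2$ yields
\[
  \left| \E^{-|\alpha|^2/2} \sum_{j=0}^{l/2} \E^{\iu j \varphi} \frac{1}{2^j j!}(-\bar{u}_1^2 + \bar{\alpha}^2)^j \right|
  \le \E^{-|\alpha|^2/2} \sum_{j=0}^{l/2} \frac{R^j}{j!}.
\]
Since $\E^{-|\alpha|^2/2} = \E^{|\bar{u}_1|^2/2}\,\E^{-R}$, the right-hand side becomes
\[
  \E^{|\bar{u}_1|^2/2} \cdot \E^{-R} \sum_{j=0}^{l/2} \frac{R^j}{j!},
\]
where the factor $\E^{|\bar{u}_1|^2/2}$ is locally bounded in $\bar{u}_1$ and contributes no divergence.

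The remaining factor is precisely a partial sum of the Poisson distribution of mean $R$, truncated at $l/2$. I then want to compare $l/2$ to $R$: with $l/2 = [x|\alpha|/2]$ and $R \ge |\alpha|^2/2$, we have
\[
  \frac{l/2}{R} \;\le\; \frac{x|\alpha|/2}{|\alpha|^2/2} \;=\; \frac{x}{|\alpha|} \;\longrightarrow\; 0
\]
as $|\alpha| \to \infty$. Hence for any fixed $\theta \in (0,1)$ (say $\theta = 1/2$), the bound $l/2 \le \theta R$ holds for all sufficiently large $|\alpha|$. The preceding proposition, applied with $M = R$ and this $\theta$, then gives $\E^{-R} \sum_{0 \le j \le l/2} R^j/j! \to 0$, proving the claim (uniformly on compact subsets of $\bar{u}_1 \in \C$ and $\varphi \in \R$).

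No real obstacle is expected: the argument is essentially bookkeeping, turning the complex expression $(-\bar{u}_1^2 + \bar{\alpha}^2)^j/(2^j j!)$ into the majorant $R^j/j!$ and noting the disparity between the linear growth of the truncation $l/2 \sim x|\alpha|/2$ and the quadratic growth of the Poisson mean $R \sim |\alpha|^2/2$. The only mild subtlety is keeping the $|\bar{u}_1|^2$ term from spoiling the exponential balance, which is exactly why rewriting $\E^{-|\alpha|^2/2}$ as $\E^{|\bar{u}_1|^2/2}\E^{-R}$ is the right move.
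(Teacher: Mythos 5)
Your argument is correct and is essentially identical to the paper's: the paper sets $\mu = \abs{\bar{u}_1}^2/2$, $m = \abs{\alpha}^2/2$, $M = \mu + m$ (your $R$), bounds the sum by $\E^{-M}\sum_{0\le j\le M/2} M^j/j!$ after noting $l/2 < M/2$ for large $\abs{\alpha}$, and invokes the preceding proposition with $\theta = 1/2$ exactly as you do. The only cosmetic difference is that the paper multiplies through by $\E^{-\mu}$ rather than factoring out the locally bounded $\E^{+\mu}$.
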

\begin{proof}
Let $\mu = \abs{\bar{u}_1^2}/2$, $m = \abs{\alpha }^2/2$ and $M = \mu + m$, and note that 
$$
  l/2 = {[x \sqrt{2m}/2]} < (\mu + m)/2 = M/2 
$$
for sufficiently large $m$.  Then 
$$
  \E ^{-\mu } \E ^{-m} \abs{\sum _{j=0}^{l/2} \E^{\iu j \varphi } \frac{1}{2^j j!} (- \bar{u}_1^2 + \bar{\alpha }^2)^j}
  \leq \E ^{-M} \sum _{0 \leq j \leq M/2} \frac{M^j}{j!} \rightarrow 0, \ (m \rightarrow \infty )
$$
follows from the previous proposition.	
\end{proof}
\begin{cor} \label{di:replace}
{Let $l/2 = [x \abs{\alpha }/2]$ for $x > 0$.  Then}	
$$
  \E^{- \vert \alpha \vert ^2/2} \sum _{j=l/2}^{\infty } \E^{\iu j \varphi } \frac{1}{2^j j!} (- \bar{u}_1^2 + \bar{\alpha }^2)^j
$$
can be replaced by
$$
  \E ^{- \E ^{\iu \varphi } \bar{u}_1^2/2} \E^{- \vert \alpha \vert ^2/2} \sum _{j=l/2}^{\infty } \E^{\iu j \varphi } \frac{1}{2^j j!} \bar{\alpha }^{2j} 
$$
for sufficiently large $\abs{\alpha }$.	
\end{cor}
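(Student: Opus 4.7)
The plan is to show that the difference of the two expressions tends to $0$ as $\abs{\alpha}\to\infty$ (pointwise in $\bar{u}_1$, uniformly on compact sets), by completing each partial sum to a full power series and observing that the two full series are \emph{identical}. Set $m=\abs{\alpha}^2/2$ and let $A$ denote the first expression and $B$ the second. The key observation is that
\[
\E^{-\abs{\alpha}^2/2}\sum_{j=0}^{\infty}\E^{\iu j\varphi}\frac{1}{2^j j!}(-\bar{u}_1^2+\bar{\alpha}^2)^j = \E^{-\abs{\alpha}^2/2}\E^{\E^{\iu\varphi}(-\bar{u}_1^2+\bar{\alpha}^2)/2},
\]
while
\[
\E^{-\E^{\iu\varphi}\bar{u}_1^2/2}\E^{-\abs{\alpha}^2/2}\sum_{j=0}^{\infty}\E^{\iu j\varphi}\frac{1}{2^j j!}\bar{\alpha}^{2j} = \E^{-\E^{\iu\varphi}\bar{u}_1^2/2}\E^{-\abs{\alpha}^2/2}\E^{\E^{\iu\varphi}\bar{\alpha}^2/2},
\]
and these two right-hand sides coincide. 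Therefore, writing each sum as its full series minus the head $\sum_{j=0}^{l/2-1}$, the bulky parts cancel and one finds
\[
B-A = \E^{-\abs{\alpha}^2/2}\sum_{j=0}^{l/2-1}\E^{\iu j\varphi}\frac{1}{2^j j!}(-\bar{u}_1^2+\bar{\alpha}^2)^j - \E^{-\E^{\iu\varphi}\bar{u}_1^2/2}\,\E^{-\abs{\alpha}^2/2}\sum_{j=0}^{l/2-1}\E^{\iu j\varphi}\frac{1}{2^j j!}\bar{\alpha}^{2j}.
\]

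Next I would show each of these two head sums vanishes in the limit. The first head differs from the partial sum $\sum_{j=0}^{l/2}$ appearing in Proposition~\ref{di:l/2->0} by a single term of the form $\E^{-\abs{\alpha}^2/2}\cdot(\text{polynomial in }\abs{\alpha})/(l/2)!$, which clearly vanishes; so Proposition~\ref{di:l/2->0} directly gives the convergence of this head to $0$. For the second head, I would estimate in absolute value by
\[
\bigl|\E^{-\E^{\iu\varphi}\bar{u}_1^2/2}\bigr|\,\E^{-m}\sum_{j=0}^{l/2-1}\frac{m^j}{j!},
\]
since $\abs{\bar{\alpha}^2/2}^j=m^j$. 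Because $l/2=[x\abs{\alpha}/2]\sim x\sqrt{m/2}=o(m)$, for every fixed $\theta\in(0,1)$ one has $l/2-1<\theta m$ once $m$ is large enough, so the Poisson-tail proposition immediately preceding Proposition~\ref{di:l/2->0} forces this sum to $0$ as $\abs{\alpha}\to\infty$.

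The prefactor $\E^{-\E^{\iu\varphi}\bar{u}_1^2/2}$ is an entire function of $\bar{u}_1$, bounded uniformly in $\varphi\in[-\pi,\pi]$ on every compact set, so it does not interfere with the convergence. Hence $B-A\to 0$, which is the content of the corollary. The main (but modest) obstacle is really just the bookkeeping that the two full power series agree and that the factor $\E^{-\E^{\iu\varphi}\bar{u}_1^2/2}$ can be pulled through the infinite sum in the $B$-expansion; once this is clean, the convergence follows immediately from the Poisson-tail estimates already established.
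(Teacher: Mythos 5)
Your proposal is correct and follows essentially the same route as the paper: both arguments rest on the exact identity between the two completed series (via $\E^{\E^{\iu\varphi}(-\bar{u}_1^2+\bar{\alpha}^2)/2}=\E^{-\E^{\iu\varphi}\bar{u}_1^2/2}\E^{\E^{\iu\varphi}\bar{\alpha}^2/2}$) together with Proposition \ref{di:l/2->0} and the Poisson-tail proposition to discard the two head sums. The only difference is presentational — the paper chains four approximations while you subtract and bound the two heads directly, and you make explicit the tail estimate for the $\bar{\alpha}^{2j}$ series that the paper leaves implicit in its final ``$\approx$''.
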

\begin{proof}
It follows from the previous proposition that
\begin{align*}
 {}& \E^{- \vert \alpha \vert ^2/2} \sum _{j=l/2}^{\infty } \E^{\iu j \varphi } \frac{1}{2^j j!} (- \bar{u}_1^2 + \bar{\alpha }^2)^j
   \approx \E^{- \vert \alpha \vert ^2/2} \sum _{j=0}^{\infty } \E^{\iu j \varphi } \frac{1}{2^j j!} (- \bar{u}_1^2 + \bar{\alpha }^2)^j
\\
  &= \E^{- \vert \alpha \vert ^2/2} \E ^{\E ^{\iu \varphi } (- \bar{u}_1^2 + \bar{\alpha }^2)/2}
  = \E^{- \vert \alpha \vert ^2/2} \E ^{- \E ^{\iu \varphi } \bar{u}_1^2/2} \E ^{\E ^{\iu \varphi } \bar{\alpha }^2/2}
  = \E ^{- \E ^{\iu \varphi } \bar{u}_1^2/2} \E^{- \vert \alpha \vert ^2/2} \sum _{j=0}^{\infty } \E^{\iu j \varphi } \frac{1}{2^j j!} \bar{\alpha }^{2j}\\
  &\approx \E ^{- \E ^{\iu \varphi } \bar{u}_1^2/2} \E^{- \vert \alpha \vert ^2/2} \sum _{j=l/2}^{\infty } \E^{\iu j \varphi } \frac{1}{2^j j!} \bar{\alpha }^{2j} .
\end{align*}
\end{proof}
For the proof of Proposition \ref{di:xebaru1} we need the following lemma.
\begin{lem}
For any $R > 0$, $\displaystyle \log \left( 1 + \frac{z}{m} \right) ^{ma}$	converges to $za$ as $m \rightarrow \infty $, uniformly on the sets $\{ a \in \C ; \abs{a} \leq R\} $ and $\{ z \in \C ; \abs{z} \leq R\} $.
\end{lem}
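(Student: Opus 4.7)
The plan is to prove uniform convergence by controlling the remainder of the power series expansion of $\log(1+w)$. For $m > R$ and $|z| \leq R$, we have $|z/m| < 1$, so the principal branch of $\log$ gives the convergent expansion
\begin{equation*}
  \log\!\left(1 + \frac{z}{m}\right) = \sum_{k=1}^{\infty} \frac{(-1)^{k+1}}{k}\left(\frac{z}{m}\right)^{k}.
\end{equation*}
Hence $\log(1+z/m)^{ma} = ma\log(1+z/m)$ is well-defined, and multiplying by $m$ yields
\begin{equation*}
  m\log\!\left(1+\frac{z}{m}\right) - z = \sum_{k=2}^{\infty} \frac{(-1)^{k+1}}{k}\,\frac{z^{k}}{m^{k-1}}.
\end{equation*}

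The first step is to estimate this tail uniformly in $z$. For $m > 2R$ one has $|z/m| \leq R/m \leq 1/2$, so the series is dominated by a convergent geometric series:
\begin{equation*}
  \left| m\log\!\left(1+\frac{z}{m}\right) - z \right|
  \leq \frac{1}{m}\sum_{k=2}^{\infty}\frac{R^{k}}{m^{k-2}}\cdot\frac{1}{k}
  \leq \frac{R^{2}}{2m}\cdot\frac{1}{1 - R/m}
  \leq \frac{R^{2}}{m}
\end{equation*}
once $m$ is large enough. This already shows $m\log(1+z/m) \to z$ uniformly on $\{|z|\leq R\}$.

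The second step multiplies by $a$: since $|a|\leq R$,
\begin{equation*}
  \bigl|\, ma\log(1+z/m) - za \,\bigr|
  = |a|\,\bigl| m\log(1+z/m) - z \bigr|
  \leq \frac{R^{3}}{m},
\end{equation*}
which tends to $0$ as $m \to \infty$, uniformly on $\{|a|\leq R\}\times\{|z|\leq R\}$. This is exactly the claim of the lemma.

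There is essentially no obstacle here: the only point requiring a little care is making sure the principal branch of $\log$ is used so that $(1+z/m)^{ma}$ is interpreted unambiguously as $\exp(ma\log(1+z/m))$ for $m$ large; once $|z/m|<1$, this is standard. The proof is a direct application of the Taylor remainder for $\log$ and a geometric-series bound, and no tool beyond elementary complex analysis is needed.
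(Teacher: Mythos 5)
Your proof is correct and follows essentially the same route as the paper: expand $\log(1+z/m)$ to first order and show the remainder is $O(|z|^2/m)$ uniformly, so that multiplying by $ma$ leaves an error of order $R^3/m$. The only difference is cosmetic --- you bound the Taylor remainder by summing the power series tail against a geometric series, whereas the paper writes the same remainder as a Cauchy contour integral and estimates the integrand; your version is the more elementary of the two.
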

\begin{proof}
We begin with the following Taylor's expansion formula of $\log \left( 1 + z \right) $ for $\abs{z} < \rho < 1$:
$$
  \log \left( 1 + z \right)  = z + z^2 \frac{1}{2\pi \iu } \int _{\abs{\zeta } = \rho } \frac{1}{\zeta ^2 (\zeta - z)} \log \left( 1 + \zeta  \right) d \zeta  .
$$		
Let $0 < \rho < 1$ and $\abs{\zeta } = \rho $.  Then
$$
  \log (1 - \rho ) \leq \log \abs{1 + \zeta } \leq \log (1 + \rho ), \ \abs{{\rm arg} (1 + \zeta )} \leq \sin ^{-1} \rho 
$$
and
$$
  \abs{\log \left( 1 + \zeta \right)} \leq \abs{\log \abs{1 + \zeta }} + \abs{{\rm arg} (1 + \zeta )}
  \leq \abs{\log (1 - \rho )} + \sin ^{-1} \rho .
$$
Let $\abs{z} < \rho $.  Then $\abs{\zeta - z} \geq \rho - \abs{z}$ and
\begin{align*}
  \abs{z^2 \frac{1}{2\pi \iu } \int _{\abs{\zeta } = \rho } \frac{1}{\zeta ^2 (\zeta - z)} \log \left( 1 + \zeta \right) d \zeta }
 & \leq \frac{1}{2\pi } \int _{\abs{\zeta } = \rho } \frac{\abs{z}^2}{\rho ^2 (\rho - \abs{z})}
  (\abs{\log (1 - \rho )} + \sin ^{-1} \rho ) d \abs{\zeta }
\\
  &= \frac{\abs{z}^2}{\rho (\rho - \abs{z})} (\abs{\log (1 - \rho )} + \sin ^{-1} \rho ) .
\end{align*}
 For any $R > 0$, there exists $M_1$ such that for $m \geq M_1$, $\abs{z/m} < \rho < 1$ holds for $\abs{z} \leq R$.
Therefore, for any $\epsilon > 0$, $R > 0$ and $\abs{a}, \abs{z} \leq R$, there exists $M \geq M_1$ such that
\begin{align*}
&  \abs{\log \left( 1 + \frac{z}{m} \right) ^{ma} - ma \frac{z}{m}} =  \abs{ma \frac{z^2}{m^2} \frac{1}{2\pi \iu } \int _{\abs{\zeta } = \rho } \frac{1}{\zeta ^2 (\zeta - z/m)} \log \left( 1 + \zeta \right) d \zeta } 
\\
&  \leq \frac{\abs{a} \abs{z}^2}{m} \frac{1}{\rho (\rho - \abs{z/m})} (\abs{\log (1 - \rho )} + \sin ^{-1} \rho )
  < \epsilon 
\end{align*}
for $m \geq M$.  This completes the proof.
\end{proof}
\begin{prop} \label{di:xebaru1}
{Let $l$ be a positive even integer and $x > 0$.}
For $\alpha = \abs{\alpha } \E ^{\iu \theta }$ such that $l = x \abs{\alpha }$
$$
  \frac{(\bar{u}_1 + \bar{\alpha })^l}{(- \bar{u}_1^2 + \bar{\alpha }^2)^{l/2}} \rightarrow \E ^{x \E ^{\iu \theta }\bar{u}_1} 
$$
as $\abs{\alpha } \rightarrow \infty $.
\end{prop}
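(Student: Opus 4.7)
The plan is to reduce the rational expression to the template $(1 + z/m)^{ma}$ covered by the preceding lemma. First I would apply the factorization $- \bar{u}_1^2 + \bar{\alpha}^2 = (\bar{\alpha} + \bar{u}_1)(\bar{\alpha} - \bar{u}_1)$. Since $l$ is assumed to be a positive even integer, $l/2$ is a nonnegative integer and no choice of branch is needed, giving
$$
\frac{(\bar{u}_1 + \bar{\alpha})^l}{(- \bar{u}_1^2 + \bar{\alpha}^2)^{l/2}}
= \left( \frac{\bar{\alpha} + \bar{u}_1}{\bar{\alpha} - \bar{u}_1} \right)^{l/2}
= \left( 1 + \frac{\bar{u}_1}{\bar{\alpha}} \right)^{l/2} \left( 1 - \frac{\bar{u}_1}{\bar{\alpha}} \right)^{-l/2} .
$$

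Second, I would parameterize in the form required by the preceding lemma. Writing $m = \abs{\alpha}$ and $\bar{\alpha} = m\E^{-\iu \theta}$, one has $\bar{u}_1/\bar{\alpha} = z/m$ with $z = \E^{\iu\theta}\bar{u}_1$, while the relation $l = x\abs{\alpha}$ yields $l/2 = (x/2)\,m$. Thus the two factors read exactly
$$
\left( 1 + \frac{z}{m} \right)^{m \cdot (x/2)}
\quad \text{and} \quad
\left( 1 + \frac{-z}{m} \right)^{m \cdot (-x/2)} ,
$$
two instances of $(1 + \zeta/m)^{ma}$ for $(\zeta,a) = (z, x/2)$ and $(\zeta,a) = (-z,-x/2)$ respectively.

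Third, I would invoke the preceding lemma, applied on a bounded neighbourhood of the point $\bar{u}_1$ (which covers pointwise convergence and in fact gives uniform convergence on compacta in $\bar{u}_1$). Each factor converges: the first to $\E^{(x/2)\E^{\iu\theta}\bar{u}_1}$, and the second to $\E^{(-\E^{\iu\theta}\bar{u}_1)(-x/2)} = \E^{(x/2)\E^{\iu\theta}\bar{u}_1}$. Multiplying gives the claimed limit $\E^{x\E^{\iu\theta}\bar{u}_1}$.

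I do not anticipate a serious obstacle: the delicate analytic work sits entirely inside the preceding lemma, whose purpose was precisely to control this $m$-dependent exponentiation uniformly on compacta. What is left here is the algebraic rewriting, and the evenness hypothesis on $l$ is used exactly so that the denominator splits unambiguously as $(\bar{\alpha}+\bar{u}_1)^{l/2}(\bar{\alpha}-\bar{u}_1)^{l/2}$ with no branch-cut ambiguity; once that splitting is in place, recognising the resulting expression as a product of two $(1+\zeta/m)^{ma}$ factors is essentially bookkeeping.
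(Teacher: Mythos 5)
Your proposal is correct and follows essentially the same route as the paper: factor $-\bar{u}_1^2+\bar{\alpha}^2=(\bar{u}_1+\bar{\alpha})(-\bar{u}_1+\bar{\alpha})$, cancel using the evenness of $l$, and apply the preceding lemma on $(1+z/m)^{ma}$ to the resulting expression (the paper writes it as a quotient of two such factors with $a=x/2$ rather than your product with exponents $\pm x/2$, which is the same computation).
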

\begin{proof}
It follows from the above lemma that
$$
\frac{(\bar{u}_1 + \bar{\alpha })^l}{(- \bar{u}_1^2 + \bar{\alpha }^2)^{l/2}}
= \frac{(\bar{u}_1 + \bar{\alpha })^l}{(\bar{u}_1 + \bar{\alpha })^{l/2}(-\bar{u}_1 + \bar{\alpha })^{l/2}}
= \frac{(\bar{u}_1 + \bar{\alpha })^{l/2}}{(-\bar{u}_1 + \bar{\alpha })^{l/2}}
$$  
$$  
= \frac{(\E ^{\iu \theta } \bar{u}_1 + \abs{\alpha })^{l/2}}{(- \E ^{\iu \theta } \bar{u}_1 + \abs{\alpha })^{l/2}}
= \frac{(\E ^{\iu \theta } \bar{u}_1/\abs{\alpha } + 1)^{x \abs{\alpha }/2}}{(- \E ^{\iu \theta } \bar{u}_1/\abs{\alpha } + 1)^{x \abs{\alpha }/2}} 
\rightarrow \frac{\E ^{\E ^{\iu \theta } \bar{u}_1 x/2}}{\E ^{- \E ^{\iu \theta } \bar{u}_1 x/2}} = \E ^{x \E ^{\iu \theta }\bar{u}_1}
$$
as $\abs{\alpha } \rightarrow \infty $.
\end{proof}
\begin{prop} \label{di:e-x2}
Let $l$ be a positive integer and $\epsilon (l) = 0$ for even $l$ and $\epsilon (l) = 1/2$ for odd $l$.  For $j \geq l/2$, 
$$
  \frac{j!}{\sqrt{(j+(l/2 \pm \epsilon (l)))! (j - ((l/2) \pm \epsilon (l)))!}} \leq 1 .
$$
Let $m = \abs{\alpha }^2/2$, {$l/2 = [x \sqrt{2m}/2] + \epsilon $ and $j = [m + \lambda \sqrt{m}] + \delta = [(1 + \lambda /\sqrt{m})m] + \delta = [\mu m] + \delta $, where $\epsilon , \delta \in \{ 0, 1 \}$.}	 Then
$$
  \frac{j!}{\sqrt{(j+l/2)! (j - l/2)!}} \rightarrow \E ^{-x^2/4\mu }
$$	
as $m \rightarrow \infty $ uniformly on the set $\abs{x^2/\mu } \leq R$ for any $R > 0$.
\end{prop}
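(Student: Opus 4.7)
The plan has two parts, corresponding to the two assertions of the proposition. The bound
$j!/\sqrt{(j+k)!(j-k)!}\le 1$ (with $k=l/2\pm\epsilon(l)$, $0\le k\le j$) follows immediately from the unimodality of the binomial coefficients: $\binom{2j}{j-k}\le\binom{2j}{j}$ is equivalent to $(j!)^2\le (j+k)!(j-k)!$, which gives the claim after taking square roots. This handles the first line of the statement and also justifies, in the later arguments, dominated passage to the limit inside the sum over $j$.

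For the asymptotic statement, I would pass to logarithms and apply Stirling's formula $n!=\sqrt{2\pi n}(n/\mathrm{e})^n\,(1+O(1/n))$ to each of the three factorials $j!$, $(j+k)!$, $(j-k)!$ with $k=l/2$. The prefactor $\sqrt{j}/(j^2-k^2)^{1/4}$ produced by the $\sqrt{2\pi n}$ terms tends to $1$, because under the scaling $j=\mu m+O(1)$ and $k=x\sqrt{2m}/2+O(1)$ one has $k/j=O(1/\sqrt{m})\to 0$ uniformly on $|x^2/\mu|\le R$. The main contribution is
\[
\log\frac{j^j}{(j+k)^{(j+k)/2}(j-k)^{(j-k)/2}}
=-\frac{j}{2}\log\!\left(1-\frac{k^2}{j^2}\right)-\frac{k}{2}\log\frac{1+k/j}{1-k/j}.
\]
Expanding the logarithms in the small parameter $k/j$ and keeping terms of order $k^2/j$ yields
\[
\frac{k^2}{2j}-\frac{k^2}{j}+O\!\left(\frac{k^4}{j^3}\right)=-\frac{k^2}{2j}+O\!\left(\frac{k^4}{j^3}\right).
\]
Substituting $k^2\sim x^2 m/2$ and $j\sim\mu m$ gives $-k^2/(2j)\to -x^2/(4\mu)$, while the error term is $O(1/m)$ uniformly for $|x^2/\mu|\le R$. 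Exponentiating gives $\mathrm{e}^{-x^2/(4\mu)}$ as claimed.

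The one place that needs care is uniformity. I would make the Stirling error quantitative, writing $\log n!=(n+\tfrac12)\log n-n+\tfrac12\log(2\pi)+r(n)$ with $|r(n)|\le C/n$ for $n\ge 1$, and then bound the remainders $r(j)+\tfrac12r(j+k)+\tfrac12r(j-k)$ by $C/(j-k)$, which under our scaling is $O(1/m)$ uniformly in the compact range $|x^2/\mu|\le R$. The Taylor remainders in the logarithmic expansion above are controlled by constants depending only on an upper bound for $k/j$, which again is $O(1/\sqrt m)$ on that range. Combining these two uniform estimates gives convergence to $\mathrm{e}^{-x^2/(4\mu)}$ uniformly on $|x^2/\mu|\le R$. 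The effect of the additive shifts $\epsilon,\delta\in\{0,1\}$ in $l/2$ and $j$ is absorbed by the same bookkeeping, since each shift changes $k$ or $j$ by $O(1)$ and therefore changes the relevant logarithms by $O(1/m)$. The main obstacle, if any, is merely keeping the error tracking honest so the convergence is genuinely uniform in the stated parameter regime.
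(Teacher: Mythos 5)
Your proposal is correct and follows essentially the same route as the paper: the first inequality is the unimodality fact $(j!)^2\le (j+l/2)!\,(j-l/2)!$ (which the paper verifies by writing the ratio as a telescoping product of factors $\le 1$), and the limit is obtained from Stirling's formula together with the expansion $(1+a/n)^n\to \E^a$, your two logarithmic terms $-\tfrac{j}{2}\log(1-k^2/j^2)$ and $-\tfrac{k}{2}\log\frac{1+k/j}{1-k/j}$ being exactly the logarithms of the paper's factors $(1+l/2j)^{-j}(1-l/2j)^{-j}$ and $(1-l/2j)^{l/2}(1+l/2j)^{-l/2}$ after taking the square root. Your quantitative tracking of the Stirling and Taylor remainders plays the role of the paper's preparatory lemma on the uniform convergence of $\log(1+z/m)^{ma}$, so nothing essential is missing.
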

\begin{proof}
The first inequality follows from the inequalities.  {For even integers $l$,}
$$
  \frac{j!j! }{(j+l/2)! (j - l/2)!} = \frac{(j - l/2+1) \cdots j}{(j+1) \cdots (j+l/2)}
  = \frac{j - l/2+1}{j+1} \cdots \frac{j+1}{j+1+l/2} < 1 .
$$
{For odd integers $l$,
\begin{align*}
 {}& \frac{j!j! }{(j+l/2 \pm 1/2)! (j - l/2 \mp 1/2)!} = \frac{(j - l/2 \mp 1/2 +1) \cdots j}{(j+1) \cdots (j+l/2 \pm 1/2)}\\[2mm]
{}&  = \frac{j - l/2 \mp 1/2 +1}{j+1} \cdots \frac{j+1}{j+1+l/2 \pm 1/2} \leq 1 .
\end{align*}
}
Stirling's approximation $n! \sim \sqrt{2 \pi n} \{ n/\E \} ^n$ implies
\begin{align*}
&\frac{j!j! }{(j+l/2)! (j - l/2)!} \approx \frac{2 \pi j \{ j/\E \} ^{2j}}{\sqrt{2 \pi (j+l/2)} \{ (j+l/2)/\E \} ^{(j+l/2)} \sqrt{2 \pi (j-l/2)} \{ (j-l/2)/ \E \} ^{j-l/2}}
\\
&= \frac{2 \pi j j^{2j}}{\sqrt{2 \pi (j+l/2)} (j+l/2)^{(j+l/2)} \sqrt{2 \pi (j-l/2)} (j-l/2)^{j-l/2}}
\\
&= \frac{j}{\sqrt{j+l/2} \sqrt{j-l/2}}
\frac{j^{2j}}{(j+l/2)^{(j+l/2)} (j-l/2)^{j-l/2}} 
\\
&= \frac{j}{\sqrt{j+l/2} \sqrt{j-l/2}} \frac{1}{(1+l/2j)^j (1-l/2j)^j} \frac{(1-l/2j)^{l/2}}{(1+l/2j)^{l/2}} .
\end{align*}
The uniform convergence of the following three sequences, as $m\to \infty$,
{
\begin{align*}
 & \frac{j}{\sqrt{j+l/2} \sqrt{j-l/2}} = \frac{1}{\sqrt{1+l/2j} \sqrt{1-l/2j}}
\\
 & = \frac{1}{\sqrt{1+ ([x \sqrt{2m}/2] + \epsilon )/([\mu m] + \delta )} \sqrt{1- ([x \sqrt{2m}/2] + \epsilon )/([\mu m] + \delta )}} \\
 &  = \frac{1}{\sqrt{1 - ([x \sqrt{2m}/2] + \epsilon )^2/([\mu m] + \delta )^2}}
  \longrightarrow 1 ,
\end{align*}
\begin{align*}
 & \frac{(1-l/2j)^{l/2}}{(1+l/2j)^{l/2}} = \frac{(1- ([x \sqrt{2m}/2] + \epsilon )/([\mu m] + \delta ))^{[x \sqrt{2m}/2] + \epsilon }}{(1+ ([x \sqrt{2m}/2] + \epsilon )/([\mu m] + \delta ))^{[x \sqrt{2m}/2] + \epsilon }} \\
 & = \frac{(1- \{ ([x \sqrt{2m}/2] + \epsilon )^2/([\mu m] + \delta )\} / ([x \sqrt{2m}/2] + \epsilon ))^{[x \sqrt{2m}/2] + \epsilon }}{(1+ \{([x \sqrt{2m}/2] + \epsilon )^2/([\mu m] + \delta )\} / ([x \sqrt{2m}/2] + \epsilon ))^{[x \sqrt{2m}/2] + \epsilon }}
\\
 & \rightarrow \frac{\E ^{-x^2/2\mu }}{\E ^{x^2/2\mu}} = \E ^{-x^2/\mu },
\end{align*}
where we used
$$
  ([x \sqrt{2m}/2] + \epsilon )^2/([\mu m] + \delta ) \rightarrow x^2 / \mu 
$$
and
$$
\frac{1}{(1+l/2j)^j (1-l/2j)^j} = \frac{1}{(1 - ([x \sqrt{2m}/2] + \epsilon )^2/([\mu m] + \delta )^2)^{[\mu m] + \delta }}
$$
$$
 = \frac{1}{(1 - \{([x \sqrt{2m}/2] + \epsilon )^2/([\mu m] + \delta )\} /([\mu m] + \delta ))^{[\mu m] + \delta }}
\rightarrow \frac{1}{\E ^{- x^2/2\mu }} = \E ^{x^2/2\mu } 
$$
}
as $m \rightarrow \infty $ on the set $\abs{x^2/\mu } \leq R$ for any $R > 0$ implies the uniform convergence of 
$$
  \frac{j!}{\sqrt{(j+l/2)! (j - l/2)!}} \rightarrow \E ^{-x^2/4\mu } .
$$		
\end{proof}
\begin{rem} \label{di:choice}
If $l$ is an odd integer, $l/2$ is not an integer, so we can not use $l/2$ in the summation $\E ^{- \abs{\alpha }^2/2} \sum _{j=l/2}^{\infty } \E^{\iu j \varphi } \frac{1}{2^j j!} (- \bar{u}_1^2 + \bar{\alpha }^2)^j$ of (\ref{di:changsum}).  Instead of $l/2$ we must choose $[l/2]$ or $[l/2]+1$.  Proposition \ref{di:e-x2} shows that the term $\frac{j!}{\sqrt{(j+l/2)! (j - l/2)!}}$ in the formula (\ref{di:psum}) is not sensitive to the choice.  But the term $\frac{(\bar{u}_1 + \bar{\alpha })^l}{(- \bar{u}_1^2 + \bar{\alpha }^2)^{l/2}}$ in (\ref{di:psum}) is very sensitive to the choice, i.e., 
$$
  \frac{(\bar{u}_1 + \bar{\alpha })^l}{(- \bar{u}_1^2 + \bar{\alpha }^2)^{[l/2]}} \rightarrow \infty , \ \ 
  \frac{(\bar{u}_1 + \bar{\alpha })^l}{(- \bar{u}_1^2 + \bar{\alpha }^2)^{[l/2] + 1}} \rightarrow 0
$$
as $\alpha \rightarrow \infty $.  We choose their geometric mean
$$
  \sqrt{\frac{(\bar{u}_1 + \bar{\alpha })^l}{(- \bar{u}_1^2 + \bar{\alpha }^2)^{[l/2]}}
  \frac{(\bar{u}_1 + \bar{\alpha })^l}{(- \bar{u}_1^2 + \bar{\alpha }^2)^{[l/2] + 1}}}
  = \frac{(\bar{u}_1 + \bar{\alpha })^l}{(- \bar{u}_1^2 + \bar{\alpha }^2)^{l/2}},
$$
then it has a nonzero limit (see Proposition \ref{di:xebaru1}).
\end{rem}

Now we have the main proposition.
\begin{prop} \label{di:mainprop}
For $\alpha = \E ^{\iu \theta } \abs{\alpha }$ and $\abs{\alpha } = l/x$ with a fixed $0 \neq x \in \R$ we have
$$
	\E^{\iu l \varphi /2} {}_2\lan \alpha \vert \varphi , l \ran \approx
	\E ^{x \E ^{\iu \theta }\bar{u}_1} \E ^{- \E ^{\iu \varphi } \bar{u}_1^2/2} \E ^{-x^2/4} 
	\E ^{(\cos (\varphi - 2\theta ) - 1)\abs{\alpha }^2/2}\E ^{\iu (\sin (\varphi - 2\theta ))\abs{\alpha }^2/2}
$$
for sufficient large {integer} $l = x\abs{\alpha }$.
\end{prop}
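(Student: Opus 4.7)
The starting point is the exact identity (\ref{di:psum}), which represents $\E^{\iu l \varphi/2}{}_2\lan\alpha\vert\varphi,l\ran$ as the product of the rational factor $(\bar{u}_1+\bar{\alpha})^l/(-\bar{u}_1^2+\bar{\alpha}^2)^{l/2}$ and a weighted sum over $j\geq l/2$. The strategy is to peel off, in turn, each of the four exponential factors on the right-hand side of the target formula by invoking the propositions already established; for odd $l$ one interprets $l/2$ in the sense of the geometric-mean choice of Remark~\ref{di:choice}, which does not affect the analysis below.

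First I would apply Proposition~\ref{di:xebaru1} directly to replace the $j$-independent rational prefactor $(\bar{u}_1+\bar{\alpha})^l/(-\bar{u}_1^2+\bar{\alpha}^2)^{l/2}$ by $\E^{x\E^{\iu\theta}\bar{u}_1}$, which is exactly the first exponential in the target, with an error vanishing uniformly on compact sets in $\bar{u}_1$. This factor is independent of $j$ and $\varphi$ and so passes outside the sum.

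Next I would treat the combinatorial weight $j!/\sqrt{(j+l/2)!(j-l/2)!}$ inside the sum. By Proposition~\ref{di:e-x2} this weight is bounded by $1$ on the full range $j\geq l/2$ and converges to $\E^{-x^2/(4\mu)}$ uniformly on compacts, where $\mu=j/m$ and $m=\abs{\alpha}^2/2$. To pull it out as the constant $\E^{-x^2/4}$ I would first apply Corollary~\ref{di:replace} to rewrite the remaining sum as
\[
\E^{-\E^{\iu\varphi}\bar{u}_1^2/2}\;\E^{-\abs{\alpha}^2/2}\sum_{j\geq l/2}\E^{\iu j\varphi}\frac{j!}{\sqrt{(j+l/2)!(j-l/2)!}}\frac{\bar{\alpha}^{2j}}{2^j j!},
\]
so that, in modulus, the coefficients are the Poisson weights $\E^{-m}m^j/j!$ with mean $m$. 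Chebyshev's inequality (Lemma~\ref{di: cpsum}) confines the effective range of summation to $\abs{j-m}\leq \lambda\sqrt m$, on which $\mu\to 1$ uniformly and hence $\E^{-x^2/(4\mu)}\to\E^{-x^2/4}$ uniformly; the total mass outside this window is bounded by $1/\lambda^2$ and is therefore harmless after letting $\abs{\alpha}\to\infty$ and then $\lambda\to\infty$.

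Finally, after extracting $\E^{x\E^{\iu\theta}\bar{u}_1}$, $\E^{-\E^{\iu\varphi}\bar{u}_1^2/2}$, and $\E^{-x^2/4}$, what remains is $\E^{-\abs{\alpha}^2/2}\sum_{j\geq 0}\E^{\iu j\varphi}\bar{\alpha}^{2j}/(2^j j!)$, where the lower bound $l/2$ has been lifted to $0$ thanks to Proposition~\ref{di:l/2->0}. This sum collapses to $\E^{-\abs{\alpha}^2/2}\E^{\E^{\iu\varphi}\bar{\alpha}^2/2}$; substituting $\bar{\alpha}^2=\abs{\alpha}^2\E^{-2\iu\theta}$ and splitting $\E^{\iu(\varphi-2\theta)}=\cos(\varphi-2\theta)+\iu\sin(\varphi-2\theta)$ yields exactly $\E^{(\cos(\varphi-2\theta)-1)\abs{\alpha}^2/2}\E^{\iu\sin(\varphi-2\theta)\abs{\alpha}^2/2}$, completing the formula. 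The main obstacle I expect is the middle step: outside the Chebyshev window the combinatorial factor is only controlled by the trivial bound $\leq 1$, so one must combine uniform Poisson concentration with the pointwise convergence of Proposition~\ref{di:e-x2} to produce an error that is uniform in $\varphi$ and locally uniform in $\bar{u}_1$. The remaining manipulations are essentially bookkeeping built on the propositions already proved.
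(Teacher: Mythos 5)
Your proposal is correct and follows essentially the same route as the paper's own proof: starting from the identity (\ref{di:psum}), extracting the prefactor via Proposition \ref{di:xebaru1}, handling the combinatorial weight via Proposition \ref{di:e-x2} together with the Chebyshev window of Lemma \ref{di: cpsum}, invoking Corollary \ref{di:replace} and Proposition \ref{di:l/2->0} to reduce to the Poisson-type exponential sum, and resumming. The only difference is the order in which you interchange the approximations (you apply Corollary \ref{di:replace} before replacing the weight by $\E^{-x^2/4}$, the paper after), which is immaterial.
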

\begin{proof}
	It follows from (\ref{di:psum}), Remark \ref{di:choice}, Proposition \ref{di:xebaru1}, Proposition \ref{di:e-x2} and Corollary \ref{di:replace} that
\begin{align*}
 \E^{\iu l \varphi /2} {}_2\lan \alpha \vert \varphi , l \ran &= \frac{(\bar{u}_1 + \bar{\alpha })^l}{(- \bar{u}_1^2 + \bar{\alpha }^2)^{l/2}} \E^{- \vert \alpha \vert ^2/2} \sum _{j=l/2}^{\infty } \E^{\iu j \varphi } 
\frac{j!}{\sqrt{(j+l/2)! (j - l/2)!}}
\frac{1}{2^j j!} (- \bar{u}_1^2 + \bar{\alpha }^2)^j 
\\
  &\approx  \E ^{x \E ^{\iu \theta }\bar{u}_1} 
  \E^{- \vert \alpha \vert ^2/2} \sum _{j=l/2}^{\infty } \E^{\iu j \varphi } 
  \E ^{-x^2/4\mu }
  \frac{1}{2^j j!} (- \bar{u}_1^2 + \bar{\alpha }^2)^j 
\\
&  \approx \E ^{x \E ^{\iu \theta }\bar{u}_1} \E ^{- \E ^{\iu \varphi } \bar{u}_1^2/2} \E^{- \vert \alpha \vert ^2/2} \sum _{j=l/2}^{\infty } \E^{\iu j \varphi } \E ^{-x^2/4\mu } \frac{1}{2^j j!} (\E ^{-2\iu \theta }\abs{\alpha }^2)^j .
\end{align*}
Let $m = \abs{\alpha }^2/2$.  Then it follows from Lemma \ref{di: cpsum} that the summation $\sum _{j=l/2}^{\infty }$ can be replaced by
$\sum _{j=m - \lambda \sqrt{m}}^{m + \lambda \sqrt{m}}$.  If $j \in [m - \lambda \sqrt{m}, m + \lambda \sqrt{m}]$, then $\mu = j/m \in [1 - \lambda /\sqrt{m}, 1 + \lambda /\sqrt{m}]$.  So, $\E ^{-x^2/4\mu }$ can be replaced by $\E ^{-x^2/4}$ for large $m$.  Thus we have
\begin{align*}
  \E^{\iu l \varphi /2} {}_2\lan \alpha \vert \varphi , l \ran &\approx \E ^{x \E ^{\iu \theta }\bar{u}_1} \E ^{- \E ^{\iu \varphi } \bar{u}_1^2/2} \E ^{-x^2/4} \E^{- \vert \alpha \vert ^2/2} \sum _{j=l/2}^{\infty } \E^{\iu j \varphi } \frac{1}{2^j j!} (\E ^{-2\iu \theta }\abs{\alpha }^2)^j
\\
  &\approx \E ^{x \E ^{\iu \theta }\bar{u}_1} \E ^{- \E ^{\iu \varphi } \bar{u}_1^2/2} \E ^{-x^2/4} \E^{- \vert \alpha \vert ^2/2} \sum _{j=0}^{\infty } \E^{\iu j (\varphi - 2\theta )} \frac{\abs{\alpha }^2/2}{j!}
\\
  &= \E ^{x \E ^{\iu \theta }\bar{u}_1} \E ^{- \E ^{\iu \varphi } \bar{u}_1^2/2} \E ^{-x^2/4} 
  \E ^{(\cos (\varphi - 2\theta ) - 1)\abs{\alpha }^2/2}\E ^{\iu (\sin (\varphi - 2\theta ))\abs{\alpha }^2/2} 
\end{align*}
for large $\abs{\alpha }$.  This completes the proof.
\end{proof}
Now our main result follows easily.
\begin{thm} \label{di:main2}
	Let $\Pi ^l$ be the projection operator onto the eigen-space of the observable $\Xi $ of (\ref{di:opXi}) with eigen-value $l$ and $\alpha = \E ^{\iu \theta } \abs{\alpha }$ and $\abs{\alpha } = l/x$.  Then
$$
	\lim _{l \rightarrow \infty } \abs{\alpha } {}_2 \lan \alpha \vert \Pi ^l \vert \alpha \ran _2
	= \frac{1}{\sqrt{2 \pi }} \E ^{x \E ^{\iu \theta } \bar{u}_1} \E ^{ - \E ^{2 \iu \theta } \bar{u}_1^2/2} \E ^{-x^2/4} 
	\overline{\E ^{x \E ^{\iu \theta } \bar{v}_1}\E ^{ - \E ^{2 \iu \theta } \bar{v}_1^2/2} \E ^{-x^2/4}}, 
$$
where $\E ^{x \E ^{\iu \theta } \bar{u}_1}\E ^{ - \E ^{2 \iu \theta } \bar{u}_1^2/2}$ is the eigen-function of the observable $\xi _1(\theta )$ of (\ref{di:opxitheta}) with eigen-value $x$.
\end{thm}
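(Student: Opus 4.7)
The proof plan is to combine the integral representation of $\Pi^l$ from Proposition \ref{di:propcosl}, the asymptotic formula for ${}_2\langle\alpha|\varphi,l\rangle$ of Proposition \ref{di:mainprop}, and the concentration of the kernel $\E^{(\cos(\varphi-2\theta)-1)|\alpha|^2}$ at $\varphi = 2\theta$ from Corollary \ref{di:cordirac}.

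First, by Proposition \ref{di:propcosl} applied with $a = 2\theta$,
\[
|\alpha|\,{}_2\langle\alpha|\Pi^l|\alpha\rangle_2
= \frac{|\alpha|}{2\pi}\int_{2\theta-\pi}^{2\theta+\pi}
{}_2\langle\alpha|\varphi,l\rangle\,\langle\varphi,l|\alpha\rangle_2\,d\varphi .
\]
The phase factor $\E^{\iu l\varphi/2}$ appearing in Proposition \ref{di:mainprop} cancels between ${}_2\langle\alpha|\varphi,l\rangle$ and its conjugate, and similarly the imaginary exponential $\E^{\iu(\sin(\varphi-2\theta))|\alpha|^2/2}$ cancels against $\E^{-\iu(\sin(\varphi-2\theta))|\alpha|^2/2}$. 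Substituting the approximation from Proposition \ref{di:mainprop} into the integrand therefore yields, for large $l = x|\alpha|$,
\[
{}_2\langle\alpha|\varphi,l\rangle\,\langle\varphi,l|\alpha\rangle_2
\approx G(\varphi,\bar{u}_1,\bar{v}_1)\,\E^{(\cos(\varphi-2\theta)-1)|\alpha|^2},
\]
where
\[
G(\varphi,\bar{u}_1,\bar{v}_1) = \E^{-x^2/4}\E^{x\E^{\iu\theta}\bar{u}_1}\E^{-\E^{\iu\varphi}\bar{u}_1^2/2}\;\overline{\E^{-x^2/4}\E^{x\E^{\iu\theta}\bar{v}_1}\E^{-\E^{\iu\varphi}\bar{v}_1^2/2}}.
\]

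Next, I would invoke Corollary \ref{di:cordirac} with $a = 2\theta$ and continuous function $f(\varphi) = G(\varphi,\bar{u}_1,\bar{v}_1)$ (thinking of $\bar{u}_1$, $\bar{v}_1$ as fixed parameters). Since $\E^{\iu\varphi}$ at $\varphi = 2\theta$ equals $\E^{2\iu\theta}$, evaluation gives
\[
G(2\theta,\bar{u}_1,\bar{v}_1) = \E^{x\E^{\iu\theta}\bar{u}_1}\E^{-\E^{2\iu\theta}\bar{u}_1^2/2}\E^{-x^2/4}\;\overline{\E^{x\E^{\iu\theta}\bar{v}_1}\E^{-\E^{2\iu\theta}\bar{v}_1^2/2}\E^{-x^2/4}},
\]
which is exactly the claimed limit times $\sqrt{2\pi}$. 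Dividing by $\sqrt{2\pi}$ (since the corollary produces the factor $|\alpha|/\sqrt{2\pi}$ while our integrand has prefactor $|\alpha|/(2\pi)$) delivers the stated formula.

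The main technical obstacle is that Proposition \ref{di:mainprop} is stated as an asymptotic pointwise approximation in $\varphi$, whereas the conclusion requires us to take limits inside the $\varphi$-integral. One therefore needs the approximation to hold uniformly in $\varphi$ on $[2\theta-\pi,2\theta+\pi]$, or at least with error that is dominated after multiplication by the concentrated Gaussian-like weight $\E^{(\cos(\varphi-2\theta)-1)|\alpha|^2}$. The chain of approximations behind Proposition \ref{di:mainprop} (the Chebyshev truncation of the Poisson tail in Lemma \ref{di: cpsum}, the Stirling-based limit $j!/\sqrt{(j+l/2)!(j-l/2)!}\to \E^{-x^2/4\mu}$ in Proposition \ref{di:e-x2}, and the limit in Proposition \ref{di:xebaru1}) is uniform on the concentration window $j\in[m-\lambda\sqrt{m},m+\lambda\sqrt{m}]$ and the factor $\E^{(\cos(\varphi-2\theta)-1)|\alpha|^2}$ is bounded by $1$, so a standard $\varepsilon/3$-type splitting (inside a small neighborhood of $\varphi=2\theta$ where the approximation is tight and the weight does all the work; outside, where the weight is exponentially small as in the last display of the proof of Proposition \ref{di:propdirac}) justifies the interchange. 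Assembling these pieces yields Theorem \ref{di:main2}; the identification $\E^{x\E^{\iu\theta}\bar{u}_1}\E^{-\E^{2\iu\theta}\bar{u}_1^2/2}$ as a generalized eigenfunction of $\xi_1(\theta)$ with eigenvalue $x$ has already been recorded in \eqref{ho:eigenvec}.
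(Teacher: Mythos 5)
Your proof follows essentially the same route as the paper's: the paper likewise combines Proposition \ref{di:mainprop} (with the phase factors $\E^{\iu l\varphi/2}$ and $\E^{\iu(\sin(\varphi-2\theta))\abs{\alpha}^2/2}$ cancelling against their conjugates), the integral representation of $\Pi^l$ from Proposition \ref{di:propcosl}, and the concentration result of Corollary \ref{di:cordirac} at $a=2\theta$, with the same $\frac{1}{\sqrt{2\pi}}$ bookkeeping. Your closing paragraph on the uniformity in $\varphi$ needed to pass the limit inside the integral is a point the paper leaves implicit, and your sketch of how to supply it is sound.
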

\begin{proof}
Proposition \ref{di:mainprop} shows
$$
	{}_2\lan \alpha \vert \varphi , l \ran \lan \varphi , l \vert \alpha \ran _2 \approx \E ^{x \E ^{\iu \theta } \bar{u}_1} \E ^{ - \E ^{\iu \varphi } \bar{u}_1^2/2} \E ^{-x^2/4} \overline{\E ^{x \E ^{\iu \theta } \bar{v}_1} \E ^{ - \E ^{\iu \varphi } \bar{v}_1^2/2} \E ^{-x^2/4}} 
	\E ^{(\cos (\varphi - 2\theta ) - 1)\abs{\alpha }^2}
$$
for sufficiently large $l = x \abs{\alpha }$.
Proposition \ref{di:propcosl} and Corollary \ref{di:cordirac} show the theorem.
\end{proof}
 
\section{Conclusion} \label{conclusion}
For the balanced homodyne detection of $p_1$, we prepare creation and annihilation  operators $c_4^*$  resp. $c_4$ of a local oscillator $\vert \gamma \ran _4$ ($\alpha = \E ^{\iu \theta } \abs{\alpha })$ and a half-beamsplitter
$$
\E ^{\iu H_{hbs}^{14}}, \  H_{hbs}^{14} = \iu(\pi /4)(c_{1}^{*} c_{4} - c_{1}c_{4}^{*}) . 
$$
Let
$$
  c_3^*c_0 + c_0^*c_3 = \sum _{l = -\infty }^{\infty } l \Pi _{03}^l \ \ {\rm and} \ \ 
   c_4^*c_1 + c_1^*c_4 = \sum _{k = -\infty }^{\infty } k \Pi _{14}^k . 
$$

Case A): If we make the measurement of $c_3^*c_0 + c_0^*c_3$ and $c_4^*c_1 + c_1^*c_4$ for the state $(\E ^{\iu H_{hbs}^{01}} \otimes I) \vert \Phi \ran \otimes \vert \alpha \ran _3 \otimes \vert \gamma \ran _4$ and get an outcome $(l, k) = (x_- \abs{\alpha }, p_+ \abs{\gamma })$ within an error $\epsilon (\abs{\alpha }, \abs{\gamma })$, i.e., $l \in ((x_- - \epsilon )\abs{\alpha }, (x_- + \epsilon )\abs{\alpha }] = (a, b]\abs{\alpha }$ and $k \in (p_+ - \epsilon , p_+ + \epsilon ]\abs{\gamma } = (c, d]\abs{\gamma }$, then the projective measurement changes the state  to 
\begin{align*}
  &\Pi _{03}^{(a, b]\abs{\alpha }} \Pi _{14}^{(c, d]\abs{\gamma }} (\E ^{\iu H_{hbs}^{01}} \otimes I) \vert \Phi \ran \otimes \vert \alpha \ran _3 \otimes \vert \gamma \ran _4 
\\
 & \approx (2 \epsilon )^2 \vert 0; x_- \ran _0
  {}_0 \lan 0; x_- \vert \otimes
  \vert \pi/2; p_+ \ran _1
  {}_1 \lan \pi/2; p_+ \vert (\E ^{\iu H_{hbs}^{01}} \otimes I) \vert \Phi \ran \otimes \vert \alpha \ran _3 \otimes \vert \gamma \ran _4 
\\
 & = (2 \epsilon )^2 \vert x_- \ran _0 \otimes \vert p_+ \ran _1 \{ {}_0 \lan x_- \vert \otimes {}_1 \lan p_+ \vert 
  (\E ^{\iu H_{hbs}^{01}} \otimes I) \vert \Phi \ran \} \otimes \vert \alpha \ran _3 \otimes \vert \gamma \ran _4 ,
\end{align*}
which follows from Proposition \ref{di:pitop} (see Remark \ref{di:bhomodynecoll}).  Note that Proposition \ref{di:pitop} follows from Theorem \ref{di:main2}.

Case B):  If we measures $N_3 - N_0$ and $N_4 - N_1$ for the state $\E ^{\iu H_{hbs}^{03}} \E ^{\iu H_{hbs}^{14}} (\E ^{\iu H_{hbs}^{01}} \otimes I) \vert \Phi \ran \otimes \vert \alpha \ran _3 \otimes \vert \gamma \ran _4$, then the collasped state is
$$
  (2\epsilon )^2 \E ^{\iu H_{hbs}^{03}} \E ^{\iu H_{hbs}^{14}} 
  \vert x_- \ran _0 \otimes \vert p_+ \ran _1 \{ {}_1 \lan x_- \vert \otimes {}_1 \lan p_+ \vert 
  (\E ^{\iu H_{hbs}^{01}} \otimes I) \vert \Phi \ran \} \otimes \vert \alpha \ran _3 \otimes \vert \gamma \ran _4 .
$$
The essential part of the teleportation is represented in the following relations:
\begin{align*}
 {}_0 \lan x_- \vert \otimes {}_1 \lan p_+ \vert \otimes I)(\E ^{\iu H_{hbs}^{01}} \otimes I \vert \Phi \ran )
&= \pi ^{-1/2} \sum _{k=0}^{\infty } {}_0 \lan k \vert D_0(x_- + \iu p_+ )^* \otimes {}_1 \lan k \vert \otimes I \vert \Phi \ran \\ \nonumber
&\rightarrow D_2(\alpha )^* \vert \psi \ran _2
\end{align*}
as $q \rightarrow 1$ for
$$
\vert \Phi \ran = \pi ^{1/2} \vert \psi \ran _0 \otimes \sum _{n=0}^{\infty } q^n \vert n \ran _1 \otimes \vert n \ran _2 \in \h _0 \hat{\otimes } \h _1 \hat{\otimes } \h _2, \ \abs{q} < 1 
$$
see (\ref{in:hbsxp}) and (\ref{pbmeasure}).  The essential part $\{ \cdots \}$ is the same in both cases A) and B), and so, the teleportation $ \vert \psi \ran _0 \rightarrow D_2(\alpha )^* \vert \psi \ran _2$ occurs in both cases.

Thus, Theorem \ref{di:main2} implies that in the framework of projective measurements, the balanced homodyne measurement causes 
quantum teleportation.

\nopagebreak

\end{document}